\newcommand{\E}{\mathbb{E}}
\newcommand{\R}{\mathbb{R}}
\newcommand{\reals}{\mathbb{R}}
\newtheorem{theorem}{Theorem}[section]
\newtheorem{remark}{Remark}[section]
\newtheorem{lemma}[theorem]{Lemma}
\newtheorem{claim}[theorem]{Claim}
\theoremstyle{definition}
\newtheorem{definition}{Definition}[section]
\newtheorem{example}{Example}[section]
\begin{document}

\title{Conducting Truthful Surveys, Cheaply}
\author{Aaron Roth\thanks{Department of Computer and Information Sciences, University of Pennsylvania. Email: {\tt aaroth@cis.upenn.edu}} \and Grant Schoenebeck\thanks{Department of Computer Science, Princeton University. Email: {\tt gschoene@cs.princeton.edu}}}

\maketitle

\begin{abstract}
We consider the problem of conducting a survey with the goal of obtaining an unbiased estimator of some population statistic when individuals have unknown costs (drawn from a known prior) for participating in the survey. Individuals must be compensated for their participation and are strategic agents, and so the payment scheme must incentivize truthful behavior. We derive optimal truthful mechanisms for this problem for the two goals of minimizing the variance of the estimator given a fixed budget, and minimizing the expected cost of the survey given a fixed variance goal.
\end{abstract}

\section{Introduction}
Consider the idealized process of conducting a survey: individuals $x$ can be randomly sampled from some population $X$, and can be asked a question $q$ for some $q:X\rightarrow \{0,1\}$. If the goal is to estimate $\E_{x \in X}[q(x)]$, then the empirical average of $q(x)$ gives us an unbiased estimator, and indeed, only $O(1/\alpha^2)$ samples are required to estimate this quantity to within accuracy $\pm \alpha$ with high probability.

However, as with many things in life, conditions are rarely ideal. Although it may be possible to randomly sample individuals from a population, it is not always possible to convince them to participate in a survey.  Participation is not without cost, and individuals are self interested. Therefore, although the population sampled for the survey may be random, the population that actually responds to the survey is biased towards the subset of the population that has a low cost for participating, which may significantly distort the results. This problem can be partially alleviated by compensating individuals for their participation. But what is the right level of compensation? For a variety of reasons, including varying time constraints and privacy concerns, the cost for participating in a survey can vary substantially from individual to individual. Any fixed level of compensation\footnote{Any fixed level of compensation below some \emph{population wide maximum} on costs: but we want to avoid having to pay each individual \$1 million to respond to our surveys.}  will therefore still result in a biased sample.  We assume that either the agents will not lie about the \emph{content} of the survey, or that their responses are somehow otherwise verifiable.

On the other hand, costs for participation are known only to the individuals and not to the surveyor, and so it is not possible to simply pay each sampled individual exactly their cost for participation. They cannot simply be asked their cost, because as rational agents, they can mis-report their true costs if it is financially beneficial for them to do so. The surveyors problem therefore lies in the realm of mechanism design: he must determine a payment rule that properly incentivizes individuals from the population to report their true cost for participation.

In this paper, we initiate the formal study of the surveyors problem from the perspective of mechanism design. We consider the problem of computing an unbiased estimator of some statistic $\E_{x \in X}[q(x)]$ when individuals from $X$ have varying costs for participation drawn from a known prior $\mathcal{F}$. We derive optimal truthful mechanisms for minimizing the variance of the estimate given a fixed budget on the expected cost of the survey, and for minimizing the expected cost of the survey given a fixed constraint on the variance of the estimator. Our mechanism is simple, practical, and ex-post dominant strategy truthful.

\subsection{Our Results}
Given a prior distribution on costs $\mathcal{F}$, we show how to conduct an survey that optimally trades off the expected variance\footnote{instead of optimizing the expected variance, we optimize for a closely related quantity that may differ by at most $1/n$ from the actual variance} of an unbiased estimator with the expected cost. We first prove a characterization theorem that greatly simplifies our design space: we show that for any truthful surveying mechanism, there exists another truthful surveying mechanism that merely makes take-it-or-leave-it offers that constructs an identical estimator, and has the same expected cost. Therefore, we may without loss of generality restrict our attention to take-it-or-leave-it survey mechanisms when seeking to optimize any function of the survey's estimator as it relates to the expected cost of the survey.

Then, under an extremely mild regularity condition on the prior distribution on costs $\mathcal{F}$, we derive the form of a surveying mechanism which optimizes any continuous, convex, and monotone function of the cost and variance of the mechanism. In particular, this allows us to derive the surveying mechanism which achieves the minimum variance unbiased estimator of a population statistic subject to a cost constraint, and similarly, the minimum cost mechanism subject to a variance constraint.

Because we are optimizing over the class of take-it-or-leave-it mechanisms (to be defined), our optimization problem reduces to finding the optimal distribution over offer prices. We note that if this were an \emph{unconstrained} optimization problem, it would end up being a relatively simple exercise in the calculus of variations. However, because we must solve for a function which is a valid probability distribution, we are facing a \emph{constrained} optimization problem which results in a number of technical complications.

\subsection{Related Work}
The problem of survey design has a long history, and is the topic of an entire field of inquiry in statistics. The idea that survey results may be biased because different individuals may have different proclivities for answering the survey as a function of their types (due to stigmatizing questions) goes back at least to Warner \cite{War65}, who proposed a ``randomized response'' approach to lessen the cost of answering a stigmatizing question. The statistical literature tends to approach the problem by making distributional assumptions about the relationship between survey response and data type, and attempts to avoid bias by compensating for these differences or imputing the values of missing data. We do not attempt to survey this vast literature, but refer the interested reader to useful texts on this topic \cite{Surveybook1,Surveybook2}. In particular, our work differs from this literature in that we do not make any assumptions about the relationship between data value and response rate, but instead take a mechanism-design approach and model survey respondents as rational agents with private costs for participating in the survey, and insist on \emph{truthful} survey mechanisms which incentivize sufficient response rate to provide accurate statistical estimates, without assumptions on how costs are related to types. Crucially, we are prepared to \emph{pay} individuals for their participation, and rely on their rational responses to guarantee the accuracy of our estimators.

The field of Bayesian optimal mechanism design originated with the groundbreaking work of Myerson \cite{Mye81}, which (in certain settings) characterized the auction rule which maximizes seller revenue, given a known prior on the distribution over buyer values. See, e.g., \cite{Auctionbook} for a survey of the literature that has followed. In this work, we take the methodology of Bayesian optimal mechanism design and apply it to the problem of survey design. We assume a known prior over \emph{costs} for individuals participation, but assume that the correlation between costs and survey responses is worst-case. We then design truthful survey mechanisms for calculating unbiased estimators while minimizing various objective functions (i.e. minimizing the variance of the estimator given a cost constraint, or minimizing the cost of the survey given a variance constraint, as well as more general objectives).

Recently, the problem of designing truthful mechanisms to estimate statistics from a population which explicitly experiences costs for privacy loss was introduced by Ghosh and Roth \cite{GR11}. Subsequently (and concurrently with this work), Ligett and Roth \cite{LR12} extend this work to sequences of take-it-or-leave-it offers. Although it has similar goals, this line of work differs from the current paper in that it measures cost using the formalism of differential privacy, and more importantly, takes a worst-case view and does not assume a known prior over agent costs. In contrast, here we use a prior over agent costs to derive \emph{optimal} mechanisms. This paper can be viewed as answering an open question posed by \cite{GR11}, which asked whether the approach of Bayesian-optimal mechanism design could be brought to bear on the data gathering problem when the distribution over agent costs was known.

\section{Preliminaries}
\subsection{Model and Mechanism Design Basics}
We consider a surveyor who is interested in some population statistic defined by a predicate on individuals, $q:X\rightarrow \{0,1\}$. We model each individual $x_i \in X$ as having an unknown cost $c_i \geq 0$ for participating in the survey (i.e. for revealing the bit $q(x_i)$). Individuals and costs $(x_i, c_i)$ are drawn i.i.d. from a distribution $\mathcal{D}$: $(x_i, c_i) \sim \mathcal{D}$, and the surveyor wishes to estimate $\E_{(x_i, c_i) \sim \mathcal{D}}[q(x_i)]$. Note that $c_i$ and $x_i$ can be arbitrarily correlated. We denote by $\mathcal{F}$ the marginal distribution of $\mathcal{D}$ on $c$.
We write $f$ for the probability density function for $\mathcal{F}$, and write $F$ for the cumulative density function.  We assume that $f$ is piecewise differentiable except, possible, on a measure 0 set.  The surveying mechanism knows (and can be a function of) $\mathcal{F}$, the marginal distribution on costs, but has no other knowledge of $\mathcal{D}$. We write $n$ to denote the number of individuals sampled by the surveying mechanism.

A survey mechanism consists of an allocation rule $A:\mathbb{R}^+\rightarrow [0, 1]$ and a possibly randomized payment rule $\hat P:\mathbb{R}^+\rightarrow \mathbb{R}_{\geq 0}$. We define $P:\mathbb{R}^+\rightarrow \mathbb{R}_{\geq 0}$ to be $P(c) = \E[\hat P(c)]$.
At time $i$, the mechanism can sample an individual $(x_i, c_i) \sim \mathcal{D}$, elicit their reported cost $\hat{c}_i$, and then with probability $A(\hat{c}_i)$ collect the survey sample $q(x_i)$ in exchange for a payment $\hat P(\hat{c}_i)$. With the remaining probability $1-A(\hat{c}_i)$ the mechanism does not conduct the survey and does not make any payment.


We model individuals as having quasi-linear utility functions and being expectation maximizers.  An individual with cost $c_i$ who reports cost $\hat{c}_i$ to a surveying mechanism, experiences cost $P(\hat{c}_i)-c_i$ when the survey is conducted, which occurs with probability $A(\hat{c}_i)$. Therefore their expected utility is:
$$u(c_i, A(\hat{c}_i), P(\hat{c}_i)) = A(\hat{c}_i)\cdot \left(P(\hat{c}_i) - c_i  \right)$$
Note that this only depends on  $\hat P$ via its expectation $P$.
We want our surveying mechanisms to be able to give accuracy guarantees even in the presence of rational, selfish agents who wish to maximize their own utility. Therefore we require that our mechanisms be truthful and individually rational.
\begin{definition}[Truthfulness]
A surveying mechanism $(A, \hat P)$ is truthful if for all $c_i, \hat{c}_i \in \reals^+$:
$$u(c_i, A(c_i), \hat P(c_i)) \geq u(c_i, A(\hat{c}_i), \hat  P(\hat{c}_i))$$
\end{definition}
\begin{definition}[Individual Rationality]
A surveying mechanism  $(A,P)$ is individually rational if for all $\hat{c}_i$ resulting in the survey being conducted:
$$P(\hat{c}_i) \geq \hat{c}_i$$
\end{definition}
Informally, truthfulness states that a utility-maximizing agent can do no better than reporting his true cost faithfully to the surveying mechanism, and individual rationality states that no individual can ever be forced to experience negative utility.

Truthfulness and individual rationality impose the constraints on our design space for truthful surveys. We now define the objective that we wish to maximize within this space. When a transaction is made with an individual $i$ through the mechanism (which occurs with probability $A(c_i)$), he supplies his bit $q(x_i) \in \{0,1\}$. The mechanism compiles these bits into an \emph{estimator} $S$ of the statistic $\E_{(x,c) \sim \mathcal{D}}[q(x)]$. We require that our mechanisms give unbiased estimates.
\begin{definition}[unbiased estimator]
An estimator $S = S(\mathcal{D}, A, n)$ of the statistic $\E_{(x, c) \sim \mathcal{D}}[q(x)]$ is \emph{unbiased} if:
$$\E[S(\mathcal{D}, A, n)] = \E_{(x, c) \sim \mathcal{D}}[q(x)]$$
where the randomness of $\E[S(\mathcal{D}, A, n)]$ is taken over the random choices of the sampled individuals $(x_i, c_i) \sim \mathcal{D}$ for $i \in [n]$ and the internal randomness of the mechanism allocation rule $A$.
\end{definition}

We wish to design mechanisms for constructing unbiased estimators of $\E_{(x, c) \sim \mathcal{D}}[q(x)]$ while minimizing \emph{variance}.

\begin{definition}[variance]
The \emph{variance} of a random variable $S(\mathcal{D}, A, n)$ is:
$$\textrm{Var}(S(\mathcal{D}, A, n)) = \E[(S(\mathcal{D}, A, n) - \E[S(\mathcal{D}, A, n)])^2]$$
\end{definition}

Note that $\textrm{Var}(S)$ depends on the unknown distribution $\mathcal{D}$\footnote{For example, most reasonable estimators $S(\mathcal{D}, A, n)$ will have $\textrm{Var}(S) = 0$ if $\mathcal{D}$ is such that $\Pr_{(x, c) \sim \mathcal{D}}[q(x) = 1] = 0$.}, even though the mechanism knows only the marginal distribution over costs, $\mathcal{F}$. Therefore, our measure of performance will be worst case variance over all distributions $\mathcal{D}$ consistent with the marginal distribution $\mathcal{F}$:

\begin{definition}[Worst case variance]
The worst case variance of a random variable $S(\mathcal{D}, A, n)$ given a marginal distribution on costs $\mathcal{F}$ is:
$$\textrm{Var}^*(S) = \max_{\mathcal{D}}\textrm{Var}(S(\mathcal{D}, A, n))$$
where the maximum is taken over all distributions $\mathcal{D}$ consistent with $\mathcal{F}$.
\end{definition}

When the distribution is not known, it is not always possible to design the minimum variance unbiased estimator of a distribution parameter.  Therefore, we will concentrate on \emph{linear estimators} in this paper.

\begin{definition}
An estimator $S$ is \emph{linear} if for each individual $i$ from which the mechanism elicits $q(x_i)$, there is a multiplier $\beta(c_i)$ such that:
$$S = \sum \beta(c_i)q(x_i)$$
\end{definition}

We will want to measure the expected cost of running a truthful survey. The mechanism only needs to make a payment $P(c_i)$ when a transaction with agent $i$ occurs, which happens with probability $A(c_i)$. Therefore we can derive the expected cost of a truthful mechanism with prior $\mathcal{F}$. The following claim is immediate:
\begin{claim}
The expected cost of surveying $n$ individuals with a truthful mechanism $(A, P)$ is:
$$\E[\textrm{cost}(A, P, n)] = n\cdot \int_0^\infty A(c)\cdot P(c)\cdot f(c)\ dc$$
\end{claim}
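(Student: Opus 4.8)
The plan is to expand the total cost as a sum over the $n$ sampled individuals and then apply linearity of expectation. For the $i$-th sampled individual, truthfulness guarantees that the reported cost equals the true cost, $\hat c_i = c_i$, so the survey is conducted with probability exactly $A(c_i)$, and by the description of the mechanism a payment is made \emph{only} in that event. Hence the contribution of individual $i$ to the total cost is the random variable $\mathbf{1}[\text{survey conducted with } i]\cdot \hat P(c_i)$, where the indicator and the payment randomness come from the mechanism's internal coins, and $\textrm{cost}(A,P,n) = \sum_{i=1}^n \mathbf{1}[\text{survey conducted with } i]\cdot \hat P(c_i)$. Note that this expression depends on $\mathcal{D}$ only through the marginal $\mathcal{F}$ on costs, since $q(x_i)$ does not enter the cost at all.

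Next I would condition on $c_i$. Given $c_i$, the survey is conducted with probability $A(c_i)$, and conditioned on that event the realized payment has expectation $\E[\hat P(c_i)] = P(c_i)$ by the definition of $P$; since no payment is made otherwise, $\E\bigl[\mathbf{1}[\text{survey conducted with } i]\cdot \hat P(c_i)\,\big|\, c_i\bigr] = A(c_i)\,P(c_i)$. Taking a further expectation over $c_i \sim \mathcal{F}$ gives $\E\bigl[\mathbf{1}[\text{survey conducted with } i]\cdot \hat P(c_i)\bigr] = \int_0^\infty A(c)\,P(c)\,f(c)\,dc$ for each $i$.

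Finally, since the pairs $(x_i,c_i)$ are drawn i.i.d.\ from $\mathcal{D}$ and the mechanism treats each sampled individual identically, all $n$ summands have the same expectation, and summing yields $\E[\textrm{cost}(A,P,n)] = n\int_0^\infty A(c)\,P(c)\,f(c)\,dc$. I do not expect any real obstacle here — the statement is essentially bookkeeping; the only points that require a word of care are invoking truthfulness to replace $\hat c_i$ by $c_i$, and observing that payments are incurred precisely when the survey is conducted, so that the allocation probability $A(c_i)$ multiplies the expected payment $P(c_i)$ rather than appearing on its own.
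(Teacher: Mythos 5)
Your proposal is correct and matches the paper's reasoning: the paper states this claim as immediate, justified only by the preceding sentence that a payment $P(c_i)$ is made exactly when a transaction occurs, which happens with probability $A(c_i)$. Your write-up simply spells out the same linearity-of-expectation and conditioning-on-$c_i$ bookkeeping, including the correct observations that truthful reporting lets you replace $\hat c_i$ by $c_i$ and that the cost depends on $\mathcal{D}$ only through the marginal $\mathcal{F}$.
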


For intuition's sake, we here give an example of a non-trivial truthful lottery mechanism:
\begin{example}
For values $c_i \in [1, \infty]$, consider a lottery mechanism with $A(x) = 1/x^2$ and $P(x) = 2x$. Note that we always have $P(x) \geq x$ and so the mechanism is individually rational. Observe that it is also truthful. For any individual $i$, we have:
$$u_i(c_i, A(\hat{c}_i), P(\hat{c}_i)) = \frac{1}{\hat{c}_i^2}\left(2\hat{c}_i - c_i\right) = \frac{2}{\hat{c}_i} - \frac{c_i}{\hat{c}_i^2}$$
taking the derivative with respect to $\hat{c}_i$, we can see that this quantity is always maximized when $\hat{c}_i = c_i$, which is the condition needed for truthfulness.
\end{example}

\subsection{Calculus of Variations Basics} \label{subsec calc of var}
In this paper we will seek to minimize the cost and variance of our mechanisms, which are functionals of our chosen allocation rule $A$ and pricing rule $P$. Functional minimization problems are addressed by the Calculus of Variations. Here we give an informal introduction to the required (very basic) preliminaries.  This can be found, for example, in~\cite{Liberzon}.

Let $U$ be a space of functions over which we would like to optimize.
Let $M: U \rightarrow \R$.
The first variation (which is corresponds to the Gateaux derivative) of $M$ at $G \in U$ in the direction $\hat G$ is simply

$$\delta M_{|G}(\hat G) =\lim_{\epsilon \rightarrow 0} \left[\frac{M((G + \epsilon \hat G) - M(G)}{\epsilon} \right]$$

A direction is \emph{feasible} if $G + \epsilon \hat G \in U$ for sufficiently small $\epsilon$.  We only deal with convex $U$.  In this case every \emph{feasible direction} of a derivative at $G$ can be written as $\hat G - G$ for some $\hat G \in U$.

For a convex $U$, we say that $G$ is a minimizer (or local minimum) of $M$ if for every feasible direction $\hat G$ we have that $\delta M_{|G}(\hat G) \geq 0$.

For a convex $U$, we call $\delta^2 M_{|G}: U \rightarrow \R$ the second variation of $M$ at $G$ if
$$M(G + \epsilon \hat G) = M(G) + \epsilon \delta M_{|G}(\hat G) + \epsilon^2\delta^2 M_{|G}(\hat G) + o(\epsilon^2)$$ for all feasible directions $\hat G \in U$ and all $\epsilon > 0$

$M$ is \emph{convex} if $\delta^2 M_{|G} \geq 0$ for all $G \in U$.  $M$ is \emph{strictly convex} if $\delta^2 M_{|G} > 0$ for all $G \in U$.  If $M$ and $U$ are convex than any minimizer is also a global minimum.   If $M$ is strictly convex and $U$ is convex than any minimizer is also a unique global minimum.

We will use Lagrange Multipliers.  If we are trying to minimize $M: U \times \R \rightarrow \R$ subject to the constraint $H:U \times \R \rightarrow \R$,  then by properties of Lagrange Multipliers we must have that any direction that decreases $M$ is in the tangent space of the constraints.  Thus it is sufficient to show
$\partial_G [M(G^*, n) + \lambda  H(G^*, n)] \geq 0$, $\partial_{\lambda} [M(G^*, n) + \lambda  H(G^*, n)] = 0$, and $\partial_n [M(G^*, n) + \lambda  H(G^*, n)] = 0$ for some $\lambda \in \R$.

\section{Simplifying the Design Space}
In this section, we show that without loss of generality, we can restrict our attention to a simple class of mechanisms.
\begin{definition}
A \emph{take-it-or-leave-it} (TIOLI) lottery mechanism is defined by a distribution $\mathcal{G}$ over $\mathbb{R}^+$ with probability density function $g$ and cumulative density function $G$. Given a price $x$ from an agent, $A(x)$ and $P(x)$ are then computed as follows:
\begin{enumerate}
\item Sample $p_i \sim \mathcal{G}$.
\item Set $A(x) = 0$ if $x > p_i$ and $A(x) = 1$ if $x \leq p_i$.
\item Set $\hat{P}(x) = 0$ if $x > p_i$ and $\hat{P}(x) = p_i$ if $x \leq p_i$.
\end{enumerate}
\end{definition}

Thus $A$ and $\hat P$ are completely determined by $\mathcal{G}$.  Namely $A(c) = 1 - G(c)$; and $\hat P(c)$ is the distribution $\mathcal{G}$ conditioned on being greater than $c$.

We first observe that a TIOLI lottery mechanism is truthful and individually rational.
\begin{theorem}
Any take-it-or-leave-it lottery mechanism is truthful and individually rational.
\end{theorem}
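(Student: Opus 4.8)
The plan is to reduce both properties to simple statements about the single-offer experiment, by first extracting the induced allocation and payment rules from $\mathcal{G}$. Recall (as noted right after the definition) that $A(c) = 1 - G(c) = \Pr_{p\sim\mathcal G}[p \geq c]$, that conditioned on a transaction the realized payment is $\hat P(c) = p$ for $p \sim \mathcal{G}$ conditioned on $p \geq c$, and hence that the expected payment conditional on a transaction is $P(c) = \E[\,p \mid p \geq c\,] = \frac{1}{1-G(c)}\int_c^\infty p\,g(p)\,dp$. Since, by the remark in the preliminaries, an agent's utility depends on $\hat P$ only through $P$, it suffices to verify the inequalities in Definitions of truthfulness and individual rationality using these $A$ and $P$.

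Individual rationality is essentially immediate and in fact holds ex post: a transaction with an agent reporting $\hat c$ occurs only in the event $\hat c \leq p$, in which case the payment made is exactly $p \geq \hat c$. Thus whenever the survey is conducted the payment weakly exceeds the reported cost, which gives the required inequality $P(\hat c) \geq \hat c$ (and, more strongly, that the realized payment never falls below $\hat c$).

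For truthfulness, I would compute the expected utility of an agent with true cost $c_i$ who reports $\hat c$. Plugging the formulas above into $u = A(\hat c)\bigl(P(\hat c) - c_i\bigr)$, the factors of $1-G(\hat c)$ cancel and the utility collapses to $\int_{\hat c}^\infty (p - c_i)\,g(p)\,dp$ (and it equals $0$ when $1 - G(\hat c) = 0$, consistent with this expression). Now the integrand $(p-c_i)g(p)$ is nonnegative exactly for $p \geq c_i$ and nonpositive for $p \leq c_i$, so among all lower limits $\hat c$ the integral is maximized precisely by taking $\hat c = c_i$; equivalently, differentiating in $\hat c$ gives $(c_i - \hat c)g(\hat c)$, which is $\geq 0$ for $\hat c < c_i$ and $\leq 0$ for $\hat c > c_i$. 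Either way, reporting $\hat c = c_i$ is optimal, establishing truthfulness. The only mildly delicate points are the bookkeeping between the conditional payment $\hat P$ and its expectation $P$, and the fact that $\mathcal{G}$ need not have a density or that $g$ need not be continuous; I would handle these by preferring the ``sign of the integrand'' argument (which only uses monotonicity of the truncated integral and the tie-breaking rule $A(x)=1$ when $x \leq p$) over the differentiation argument, so that no regularity of $\mathcal{G}$ is needed. I do not anticipate a substantive obstacle here.
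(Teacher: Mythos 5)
Your proof is correct and is in substance the same as the paper's: individual rationality is handled identically (ex post, the realized payment $p$ weakly exceeds the reported cost whenever a transaction occurs), and your sign-of-the-integrand argument for truthfulness is just the integrated form of the paper's argument, which fixes each realized price $p_i$ and observes that truthful reporting is weakly optimal for every realization since the agent wants to transact exactly when $p_i \geq c_i$. The paper's per-realization phrasing additionally makes explicit that truthfulness holds ex post rather than merely in expectation, while your version has the minor advantage of not needing any regularity of $\mathcal{G}$ beyond what the definition already assumes; the underlying idea is the same.
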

\begin{proof}
It is immediate that TIOLI mechanisms are individually rational, since whenever $A(c_i) = 1$ we have $c_i \leq p_i$ and so: $P(c_i) = p_i \geq c_i$. To see that the mechanism is truthful, observe that for each fixed $p_i$, if $c_i \leq p_i$ then individual $i$ gets utility $p_i - c_i$ for every reported valuation $\hat{c}_i \geq p_i$ and utility $0$ for every reported valuation $\hat{c}_i < p_i$. If $c_i \leq p_i$ then $p_i - c_i \geq 0$ and the so the utility of reporting $\hat{c}_i = c_i$ is non-negative and is not changed by over-reporting $\hat{c}_i > c_i$ (and under-reporting can only reduce the payoff to 0). If $c_i >p_i$ then $p_i - c_i < 0$, which cannot be improved by under-reporting $\hat{c}_i < c_i$ (and over-reporting can only result in negative utility).
\end{proof}
We now show that without loss of generality, we may restrict our attention to TIOLI lottery mechanisms: that is, every lottery mechanism $(A, P)$ is implementable as a TIOLI lottery mechanism with the same allocation rule and the same expected payment.
\begin{lemma}
\label{lem:monotone}
If $(A, P)$ is a truthful lottery mechanism, then $A(x)$ must be monotonically decreasing in $x$.
\end{lemma}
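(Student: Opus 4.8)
This is the standard ``two deviations'' monotonicity argument from single-parameter mechanism design, adapted to the fact that here the private parameter is a cost rather than a value. Fix any two costs $c_1 < c_2$; the goal is to show $A(c_1) \geq A(c_2)$. The plan is to write down the truthfulness inequality twice — once forbidding the agent with true cost $c_1$ from reporting $c_2$, and once forbidding the agent with true cost $c_2$ from reporting $c_1$ — add the two inequalities, and watch the payment terms cancel.

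\textbf{Key steps.} First, recall that truthfulness of $(A,P)$ says $A(c)(P(c)-c) \geq A(\hat c)(P(\hat c)-c)$ for all $c,\hat c$ (the utility depends on $\hat P$ only through its expectation $P$, so it suffices to argue about $P$). Applying this with $c=c_1,\hat c=c_2$ gives
$$A(c_1)(P(c_1)-c_1) \geq A(c_2)(P(c_2)-c_1),$$
and with $c=c_2,\hat c=c_1$ gives
$$A(c_2)(P(c_2)-c_2) \geq A(c_1)(P(c_1)-c_2).$$
Second, add these. The terms $A(c_1)P(c_1)$ and $A(c_2)P(c_2)$ appear on both sides and cancel, leaving
$$-A(c_1)c_1 - A(c_2)c_2 \geq -A(c_2)c_1 - A(c_1)c_2,$$
which rearranges to $A(c_2)(c_1-c_2) \geq A(c_1)(c_1-c_2)$. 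Third, since $c_1 - c_2 < 0$, dividing through reverses the inequality and yields $A(c_1) \geq A(c_2)$, as desired. Since $c_1 < c_2$ were arbitrary, $A$ is monotonically decreasing.

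\textbf{Main obstacle.} There is essentially no obstacle: the argument is a two-line algebraic manipulation once both truthfulness inequalities are written down, and it requires neither individual rationality nor any regularity assumption on $\mathcal{F}$, nor any positivity of $P$ — the cancellation of the payment terms is what makes it robust. The only point worth a word of care is the degenerate case $A(\hat c)=0$, where the utility of misreporting is $0$ regardless of $P(\hat c)$; but the inequalities above hold verbatim in that case as well, so the conclusion is unaffected.
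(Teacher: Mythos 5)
Your proof is correct and is essentially identical to the paper's: both write the two truthfulness inequalities for the pair of costs, add them so the payment terms cancel, and divide by the sign-determined factor (you divide by $c_1-c_2<0$; the paper states the contrapositive and divides by $A(x_2)-A(x_1)>0$, which is the same algebra). No issues.
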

\begin{proof}
Let $x_1, x_2 \in \mathbb{R}^+$. We will show that if $A(x_2) > A(x_1)$ then $x_1 > x_2$. By the truthfulness of the mechanism $(A, P)$ we have that for all $c$:
$$u_i(c, A(x), P(x)) = A(x)(P(x) - c) = A(x)P(x) - A(x)c$$
is maximized at $x = c$. In particular, considering $c = x_1$ and $c = x_2$ we have the following pair of inequalities:
$$A(x_1)P(x_1) - A(x_1)x_1 \geq A(x_2)P(x_2) - A(x_2)x_1$$
$$A(x_2)P(x_2) - A(x_2)x_2 \geq A(x_1)P(x_1) - A(x_1)x_2$$
Adding these two inequalities, we find:
$$x_1(A(x_2)-A(x_1)) + x_2(A(x_1) - A(x_2)) \geq 0$$
Note that by assumption, $(A(x_2) - A(x_1)) > 0$. Therefore, dividing both sides by $(A(x_2) - A(x_1))$ we have:
$$x_1 - x_2 \geq 0$$
which completes the proof.
\end{proof}

\begin{theorem}
Every truthful differentiable allocation rule $A:\mathbb{R}^+\rightarrow [0,1]$ is implementable as a take-it-or-leave-it lottery mechanism with the same allocation rule $A(x)$ and the same expected payment $\E[A(x)P(x)]$.
\end{theorem}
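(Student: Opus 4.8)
The plan is to combine the standard Myerson-style payment characterization for single-parameter mechanisms with an explicit construction of the threshold distribution. Fix a truthful mechanism $(A,P)$, and write $u(c) := A(c)(P(c)-c)$ for its interim expected utility. First I would record the structural facts we get for free: by Lemma~\ref{lem:monotone}, $A$ is monotonically decreasing; $u$ is nonincreasing (this follows from the two truthfulness inequalities exactly as in the proof of Lemma~\ref{lem:monotone}, or from the envelope theorem) and nonnegative by individual rationality, hence $u(c)\downarrow u_\infty \ge 0$ as $c\to\infty$, and consequently $\int_c^\infty A(t)\,dt = u(c)-u_\infty < \infty$; in particular $A(c)\to 0$ and $cA(c)\to 0$. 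After replacing $P$ by $P(c)-u_\infty/A(c)$ if necessary — which preserves truthfulness, individual rationality, and the allocation rule while only lowering the payment, hence is without loss for a cost objective — I may assume $u_\infty = 0$, and then the envelope identity $u(c)=\int_c^\infty A(t)\,dt$ gives
$$A(c)P(c) = cA(c) + \int_c^\infty A(t)\,dt,$$
so that the (expected) per-agent payment $A(c)P(c)$ is pinned down by $A$ alone.

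Next I would construct the TIOLI mechanism. Define $\mathcal G$ by the cumulative density function $G(x) := 1 - A(x)$: by the facts above this is nondecreasing, tends to $1$ as $x\to\infty$, and (using differentiability of $A$) has density $g(x) = -A'(x)\ge 0$ on $(0,\infty)$; if $A(0)<1$, put an additional atom of mass $1-A(0)$ at the price $0$, which never triggers an allocation for an agent of positive cost. Straight from the definition of a TIOLI mechanism, its allocation rule is $1-G(c)=A(c)$, matching $A$ (save possibly at the single point $c=0$, which affects neither the estimator nor the cost unless $\mathcal F$ has an atom there). For the payment, conditioning on the sampled threshold $p$, a report $c$ yields payment $p$ precisely when $p\ge c$, so the TIOLI mechanism's expected payment is
$$A(c)\cdot P_{\mathrm{TIOLI}}(c) = \int_c^\infty p\,g(p)\,dp = \int_c^\infty p\,(-A'(p))\,dp = cA(c) + \int_c^\infty A(t)\,dt,$$
where the last step is integration by parts using $pA(p)\to 0$. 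This is exactly the expression above, so the two mechanisms share the same allocation rule and the same expected payment, which is what we want.

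The genuinely delicate step — and the one I would be most careful about — is the normalization $u_\infty = 0$. A general truthful, individually rational mechanism can have $u_\infty > 0$: it simply overpays every agent by a fixed surplus, and in that case the TIOLI mechanism has the \emph{same} allocation rule but a \emph{strictly smaller} expected payment. For this paper that is harmless (indeed desirable), since we only ever seek to minimize cost, but it means the clause ``the same expected payment'' must be understood under this normalization, equivalently as ``no larger expected payment.'' The remaining items are routine: verifying monotonicity of $u$, checking that $G=1-A$ is a bona fide distribution function, handling the atom at $0$, and justifying the vanishing of the boundary term $pA(p)$ (which follows from monotonicity of $A$ together with the tail integrability established at the outset).
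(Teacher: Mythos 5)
Your proof is correct and follows essentially the same route as the paper: define the threshold density $g=-A'$, observe via Lemma~\ref{lem:monotone} that this is a (sub-)probability density reproducing the allocation rule, and use the truthfulness first-order condition (equivalently, your envelope identity $u(c)=\int_c^\infty A(t)\,dt$) to show $A(c)P(c)=\int_c^\infty t\,g(t)\,dt$. The one substantive point where you go beyond the paper is the boundary term at infinity: the paper's integration of $\tfrac{d}{dc}(A^*P^*(c))$ from $x$ to $\infty$ silently assumes $A^*(c)P^*(c)\to 0$, i.e.\ $u_\infty=0$, and you are right that a truthful, individually rational mechanism can have $u_\infty>0$ (a uniform overpayment), in which case the TIOLI implementation has the same allocation rule but strictly \emph{smaller} expected payment. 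Your normalization, and the reading of ``same expected payment'' as ``no larger expected payment,'' is exactly the correct repair and is harmless for every optimization objective considered in the paper.
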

\begin{proof}
Fix an allocation rule $A^*$ and a corresponding payment rule $P^*$ so that $(A^*, P^*)$ is a truthful lottery mechanism. Define a TIOLI lottery mechanism with allocation and payment rules denoted $A, P$ respectively, by defining the distribution over offers $\mathcal{G}$ to be the distribution with probability density function $g(x) = -\frac{d}{dx}A^*(x)$. Note that by lemma \ref{lem:monotone} we have $g(x) \geq 0$ for all $x$, and $\int_0^\infty g(x) = A^*(0) \leq 1$, and so $g$ is a valid probability density function. Now consider the probability that this TIOLI mechanism transacts with an individual $i$ with cost $c_i$: this occurs exactly when the sampled offer $p_i$ is such that $p_i \geq c_i$, which occurs with probability:
$$A(c_i) = \int_{c_i}^\infty g(x) d x = -\int_{c_i}^\infty  \frac{d}{dx}A^*(x)  = A^*(c_i)$$
Thus the given TIOLI lottery mechanism implements exactly the allocation rule $A$. It remains to show that the expected payment of both mechanisms are the same:  $\E[A(x)P(x)] = A^*(x)P^*(x)$ for all $x$.
\begin{eqnarray*}
\E[A(x)P(x)] &=&  \E[P(x) | p_i \geq x]\cdot \Pr[p_i \geq x] \\
&=& \left(\frac{\int_{x}^\infty t \cdot g(t) dt}{A(x)}\right)\cdot A(x) \\
&=&  \int_{x}^\infty t \cdot g(t) dt
\end{eqnarray*}
Now consider the utility of agent $i$ reporting value $x$ to mechanism $(A^*, P^*)$:
$$u(c_i, A^*(x), P^*(x)) = A^*(x)\left(P^*(x) - c_i\right) = A^*(x)P^*(x) - c_iA^*(x))$$
By the truthfulness of this mechanism, for every value of $c_i$, this expression is maximized by setting $x = c_i$. Write the function $A^*P^*(x) = A^*(x)P^*(x)$. Taking the derivative of the above expression and setting it to zero, we have simultaneously for all values $c$:
$$\frac{d}{d c} (A^*P^*(c)) = c\cdot \frac{d}{d c}(A^*(c))$$
Integrating both sides of the above equality from $x$ to $\infty$ we have:
\begin{eqnarray*}
-A^*(x)P^*(x)  &=& \int_{x}^\infty \frac{d}{d c} (A^*P^*(c)) dc \\
&=& \int_{x}^\infty c\cdot \frac{d}{d c}(A^*(c))dc \\
&=& -\int_{x}^\infty c\cdot g(c) dc
\end{eqnarray*}
In other words, for all $x$:
$$A^*(x)P^*(x) = \int_{x}^\infty c\cdot g(c) dc = \E[A(x)P(x)]$$
\end{proof}
\begin{remark}
The requirement that the allocation rule is differentiable is for technical convenience, and can be relaxed to a more mild piecewise differentiable condition.
\end{remark}
We have just seen that with respect to allocation and payment rules, we can restrict our attention without loss of generality to TIOLI mechanisms. What about unbiased estimators? In our setting, as it happens, the celebrated Horvitz-Thompson estimator is the \emph{unique} unbiased \emph{linear} estimator, and therefore the linear unbiased estimator of minimum variance, independent of the distribution $\mathcal{D}$.
\begin{definition}[Horvitz-Thompson Estimator]
For an allocation rule $A(x)$, the Horvitz-Thompson estimator computes:
$$S_{HT}(\mathcal{D}, A, n) = \frac{1}{n}\sum_{i=1}^n d_i$$
where:
$$d_i = \left\{
          \begin{array}{ll}
            \frac{q(x_i)}{A(c_i)}, & \hbox{If the mechanism transacts with individual $i$;} \\
            0, & \hbox{Otherwise.}
          \end{array}
        \right.$$
\end{definition}

\begin{theorem}[\cite{HT52}]
For every distribution $\mathcal{D}$ and every allocation rule $A$, $S_{HT}(\mathcal{D}, A, n)$ is the minimum variance linear unbiased estimator of $\E[q(x_i)]$.
\end{theorem}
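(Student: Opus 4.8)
The plan is to prove the stronger statement already asserted in the surrounding text: the Horvitz--Thompson estimator is the \emph{unique} linear estimator that is unbiased for every $\mathcal{D}$ consistent with the known marginal $\mathcal{F}$. Minimum variance is then immediate, since $S_{HT}$ has no competitor.

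First I would compute the expectation of an arbitrary linear estimator. Writing $S = \sum_{i=1}^n \mathbf{1}[\text{transact }i]\,\beta(c_i)\, q(x_i)$, using that the pairs $(x_i,c_i)$ are i.i.d.\ from $\mathcal{D}$ and that, conditioned on $c_i$, the transaction indicator has mean $A(c_i)$, linearity of expectation gives $\E[S] = n\cdot \E_{(x,c)\sim\mathcal{D}}[A(c)\,\beta(c)\,q(x)]$. In particular, the Horvitz--Thompson weights $\beta(c)=1/(nA(c))$ make this equal $\E_{(x,c)\sim\mathcal{D}}[q(x)]$, recovering unbiasedness of $S_{HT}$.

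Next I would exploit that unbiasedness must hold against a worst-case coupling of costs and values. Since $\mathcal{D}$ may be any joint distribution with marginal $\mathcal{F}$ on $c$, the conditional probability $p(c):=\Pr_{\mathcal{D}}[q(x)=1\mid c]$ can be an arbitrary measurable $p:\reals^+\to[0,1]$, and for such a $\mathcal{D}$ we have $\E_{\mathcal{D}}[q(x)]=\int_0^\infty p(c)f(c)\,dc$ and $\E_{\mathcal{D}}[A(c)\beta(c)q(x)]=\int_0^\infty A(c)\beta(c)p(c)f(c)\,dc$. Thus unbiasedness for all $\mathcal{D}$ is equivalent to $\int_0^\infty\big(nA(c)\beta(c)-1\big)p(c)f(c)\,dc=0$ for every measurable $p:\reals^+\to[0,1]$. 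A standard test-function argument (take $p$ to be the indicator of $\{c:\, nA(c)\beta(c)>1,\ f(c)>0\}$, then of the opposite inequality) forces $nA(c)\beta(c)=1$ for $\mathcal{F}$-almost every $c$; in particular this forces $A(c)>0$ a.e.\ on the support of $\mathcal{F}$ and $\beta(c)=1/(nA(c))$ there, i.e.\ $\beta$ is exactly the Horvitz--Thompson multiplier. On the $\mathcal{F}$-null set, and where $A(c)=0$ so that $i$ is never transacted, the value of $\beta(c)$ is irrelevant. Hence $S_{HT}$ is the unique linear unbiased estimator, so it trivially has the smallest (worst-case) variance among them.

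The one real subtlety — the step I would be most careful about — is the quantifier on $\mathcal{D}$: the claim is false if one only requires unbiasedness for a single known $\mathcal{D}$, so the argument genuinely uses that the surveyor knows only $\mathcal{F}$ and must be simultaneously unbiased for all consistent $\mathcal{D}$. The remainder is the routine ``fundamental lemma of the calculus of variations'' passage from an integral identity holding for all test functions to an almost-everywhere pointwise identity, plus the measure-theoretic bookkeeping around the null set and the degenerate case $A(c)=0$; if one instead insisted on the classical Gauss--Markov/Horvitz--Thompson route allowing a richer class of unbiased estimators one would compare variances directly, but uniqueness makes that unnecessary here.
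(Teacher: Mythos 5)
Your proposal is correct and follows exactly the route the paper itself takes: the paper cites \cite{HT52} and justifies the theorem only by the remark that $S_{HT}$ is the \emph{unique} unbiased linear estimator (hence vacuously minimum-variance), which is precisely the uniqueness claim you prove via the test-function argument forcing $nA(c)\beta(c)=1$ for $\mathcal{F}$-almost every $c$. Your flagged subtlety about the quantifier on $\mathcal{D}$ is also the right reading of the statement, consistent with the paper's phrase ``independent of the distribution $\mathcal{D}$.''
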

Moreover, the Horvitz-Thompson estimator is \emph{admissible} among the set of all unbiased estimators of $\E_{(x, c) \sim \mathcal{D}}[q(x)]$ \cite{GJ65} (even non-linear estimators). This means that the estimator is \emph{un-dominated}: there is no other estimator that has lower variance than $S_{HT}$ with respect to every distribution $\mathcal{D}$.

Therefore, for the rest of the paper, we restrict our attention to TIOLI mechanisms that use the Horvitz-Thompson estimator.

\subsection{Price and Variance of TIOLI mechanisms that use the Horvitz-Thompson estimator}

In this section we derive the cost and variance of the TIOLI mechanisms that use the Horvitz-Thompson estimator.

Recall that for a TIOLI mechanism with price distribution $\mathcal{G}$, we have: \begin{eqnarray*}
\E[\textrm{Cost}(A(\mathcal{G}), P(\mathcal{G}),n)] &=& n\cdot \int_0^\infty A(c)\cdot P(c)\cdot f(c) dc \\
&=&  n\cdot \int_0^\infty f(c)\cdot\left(\int_c^\infty x\cdot g(x) dx\right) \ dc \\
&=& n \cdot \int_0^\infty c\cdot g(c)\cdot F(c) dc
\end{eqnarray*}

We can compute (up to additive error $1/n$) the worst case \emph{variance} of a TIOLI mechanism characterized by distribution  $\mathcal{G}$:
\begin{lemma}
The worst-case variance of a Horvitz-Thompson TIOLI mechanism parameterized with distribution $\mathcal{G}$ used on a marginal value distribution $\mathcal{F}$ for $n$ iterations is:
$$\frac{1}{n}\int_{1}^\infty\frac{f(x)}{1-G(x)} dx - \frac{1}{n}\leq \textrm{Var}^*(A(\mathcal{G}), \mathcal{F},n) \leq \frac{1}{n}\int_{1}^\infty\frac{f(x)}{1-G(x)} dx$$
\end{lemma}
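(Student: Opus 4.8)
The plan is to derive an exact expression for the variance of the Horvitz--Thompson estimator as a functional of the joint distribution $\mathcal{D}$, and then bound its worst case over $\mathcal{D}$ from above and below by two crude estimates that differ by exactly $1/n$. Since the pairs $(x_i,c_i)$ are i.i.d.\ and the mechanism's internal randomness (the offers $p_i\sim\mathcal{G}$) is independent across the $n$ rounds, the terms $d_i$ in $S_{HT}=\frac1n\sum_{i=1}^n d_i$ are i.i.d., so $\textrm{Var}(S_{HT})=\frac1n\textrm{Var}(d_1)$ and it suffices to analyze a single term. Conditioning on $(x_1,c_1)$ and using that, given $c_1$, the transaction indicator is Bernoulli with mean $A(c_1)=1-G(c_1)$ and independent of $x_1$, I get $\E[d_1\mid x_1,c_1]=q(x_1)$ (re-deriving unbiasedness) and, crucially using $q(x_1)^2=q(x_1)$ since $q$ is $\{0,1\}$-valued, $\E[d_1^2\mid x_1,c_1]=q(x_1)/A(c_1)$. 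Taking expectations over $(x,c)\sim\mathcal{D}$ then gives
$$\textrm{Var}(d_1)=\E_{(x,c)\sim\mathcal{D}}\!\left[\frac{q(x)}{1-G(c)}\right]-\Big(\E_{(x,c)\sim\mathcal{D}}[q(x)]\Big)^{2}.$$

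Next I would take the worst case over all $\mathcal{D}$ consistent with $\mathcal{F}$. Writing $p(c)=\Pr_{\mathcal{D}}[q(x)=1\mid c]$, such $\mathcal{D}$ are in bijection with measurable functions $p:\mathbb{R}^+\to[0,1]$, and the formula above becomes $\textrm{Var}(d_1)=\int\frac{p(c)f(c)}{1-G(c)}\,dc-\left(\int p(c)f(c)\,dc\right)^{2}$. For the upper bound I discard the nonpositive squared term and use $p(c)\le 1$, which gives $\textrm{Var}(d_1)\le\int\frac{f(c)}{1-G(c)}\,dc$ and hence $\textrm{Var}^{*}(S)\le\frac1n\int_{1}^{\infty}\frac{f(x)}{1-G(x)}\,dx$. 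For the lower bound I exhibit one bad distribution: take $p\equiv 1$, i.e.\ $q(x)=1$ almost surely. Then $\int pf=\int f=1$, so $\textrm{Var}(d_1)=\int\frac{f(c)}{1-G(c)}\,dc-1$, and since $\textrm{Var}^{*}$ is a maximum over $\mathcal{D}$ this is a lower bound: $\textrm{Var}^{*}(S)\ge\frac1n\int_{1}^{\infty}\frac{f(x)}{1-G(x)}\,dx-\frac1n$. Combining the two displays proves the lemma.

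I do not expect a serious obstacle: the two ideas that make it work are the i.i.d.\ collapse of $\textrm{Var}(S_{HT})$ to a one-term variance and the identity $q^{2}=q$, which gives $\E[d_1^{2}]$ its clean form $\E[q(x)/A(c)]$. The one point worth flagging is that one need \emph{not} actually solve the optimization $\sup_{p:\mathbb{R}^+\to[0,1]}\big[\int\frac{pf}{1-G}-(\int pf)^{2}\big]$, whose exact maximizer is a nontrivial interior $p$; the trivial bound $(\int pf)^{2}\ge 0$ together with the single test case $p\equiv 1$ already sandwich $\textrm{Var}^{*}(S)$ to within the advertised $1/n$. Beyond that, the things to handle carefully are the independence assertions used in the conditioning step, the standing normalization that costs lie in $[1,\infty)$ (so $\int f=1$ and the support-wide integral of $f/(1-G)$ is the $\int_1^\infty$ in the statement), and the degenerate case $G(c)=1$ on a positive-measure subset of the support, where $\int_1^\infty f/(1-G)=\infty$ and both inequalities hold trivially.
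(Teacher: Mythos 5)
Your proof is correct and takes essentially the same route as the paper's: reduce to $\textrm{Var}(d_1)$ by independence, compute $\E[d_1^2]=\E[q(x)/A(c)]$ using $q^2=q$, discard the nonnegative $(\E[d_1])^2\le 1$ term for the upper bound, and witness the lower bound with the distribution $q\equiv 1$. Your side remarks (the $\int_1^\infty$ vs.\ $\int_0^\infty$ normalization and the degenerate $G\equiv 1$ case) are sensible points the paper glosses over.
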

\begin{proof}
Recall that given an allocation rule $A$, the Horvitz-Thompson estimator is defined as:
$$S_{HT}(\mathcal{D}, A, n) = \frac{1}{n}\sum_{i=1}^n d_i\ \  \textrm{where}\ \ d_i = \left\{
          \begin{array}{ll}
            \frac{q(x_i)}{A(c_i)}, & \hbox{If the mechanism transacts with individual $i$;} \\
            0, & \hbox{Otherwise.}
          \end{array}
        \right.$$

We will analyze $\textrm{Var}(d_1)$ and then observe that $\textrm{Var}(\frac{1}{n}\sum_{i=1}^n d_i) = \frac{1}{n}\textrm{Var}(d_1)$ since each $d_i$ is independent and identically distributed. First note that $\textrm{Var}(d_1) = \E[d_1^2] - (E[d_1])^2$. We observe that $0 \leq E[d_1]^2 \leq 1$ and that:
\begin{eqnarray*}
\E[d_1^2] &=& \int_0^\infty \left(f(v)\cdot A(v)\cdot \Pr[q(x) = 1 | c = v]\cdot \left(\frac{1}{A(v)}\right)^2\right) dv \\
&\leq& \int_0^\infty \left(f(v)\cdot A(v)\cdot  \left(\frac{1}{A(v)}\right)^2\right) dv \\
&=& \int_0^\infty \frac{f(v)}{A(v)} dv \\
&=& \int_0^\infty \frac{f(v)}{1-G(v)} dv
\end{eqnarray*}
with equality holding for the distribution that sets $q(x) = 1$ independent of the valuation $c$. Hence, we have:
$$\int_0^\infty \frac{f(v)}{1-G(v)} dv - 1 \leq \textrm{Var}^*(d_1) \leq  \int_0^\infty \frac{f(v)}{1-G(v)} dv$$
as claimed.
\end{proof}

Let $\textrm{V}^*(A(\mathcal{G}), \mathcal{F}, n) = \int_0^\infty \frac{f(v)}{1-G(v)} dv$.  Instead of optimizing for the worst-case expected variance $\textrm{Var}^*(A(\mathcal{G}), \mathcal{F}, n)$ we will optimize for $\textrm{V}^*(A(\mathcal{G}), \mathcal{F}, n)$, since $\textrm{V}^*(A(\mathcal{G}), \mathcal{F}, n)$ is a very good approximation and will be much more  convenient.
Additionally, we note that if we change the estimator slightly to be
$$\tilde{S}_{HT}(\mathcal{D}, A, n) = \frac{1}{n}\sum_{i=1}^n d_i\ \  \textrm{where}\ \ d_i = \left\{
          \begin{array}{ll}
            \frac{(-1)^{q(x_i)}}{A(c_i)}, & \hbox{If the mechanism transacts with individual $i$;} \\
            0, & \hbox{Otherwise.}
          \end{array}
        \right.$$

Then this is an unbiased estimator of $1- 2\E[q(x_i)]$ with expected variance exactly $V^*(A(\mathcal{G}), \mathcal{F}, n)$.

For a TIOLI mechanism $\textrm{Cost}(A(\mathcal{G}), P(\mathcal{G}),n)$ and $V^*(A(\mathcal{G}), \mathcal{F}, n)$ can we written in terms of $\mathcal{F}$, $\mathcal{G}$ and $n$, and no longer need to be stated in terms of $\mathcal{D}$, $A$, or $P$.  Thus we will, in the future, denote them by  $Cost(\mathcal{F}, \mathcal{G}, n)$ and $V^*(\mathcal{F}, \mathcal{G}, n)$, and will often put fixed terms in the subscripts (e.g. $\textrm{Cost}_{\mathcal{F},n}(\mathcal{G})$).

\begin{claim} \label{claim derivative of cost} Let $\mathcal{F}$, $\mathcal{G}$, and $\mathcal{\hat G}$ have piecewise twice differentiable CDFs, then  $$\delta Cost_{\mathcal{F}, n | \mathcal{\hat{G} - G}}(\hat{\mathcal{G}}-\mathcal{G}) =  \int_0^{\infty} (\hat g(y) - g(y)) y F(y) dy$$
\end{claim}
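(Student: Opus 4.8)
The plan is to observe that, once the prior $\mathcal{F}$ is fixed, $\mathrm{Cost}_{\mathcal{F},n}$ is an affine functional of the density $g$, so its first variation falls out of the definition of the Gateaux derivative with no further calculus-of-variations machinery. Concretely, I would start from the closed form derived just above,
$$\mathrm{Cost}_{\mathcal{F},n}(\mathcal{G}) \;=\; n\int_0^\infty y\,g(y)\,F(y)\,dy,$$
which was obtained from the take-it-or-leave-it payment identity $A(c)P(c)=\int_c^\infty x\,g(x)\,dx$ by exchanging the order of integration (Fubini/Tonelli, legitimate since the integrand is nonnegative). The point to stress is that $F$ is the cumulative distribution of the \emph{prior} on costs and is held fixed throughout; the only dependence of the right-hand side on the design object $\mathcal{G}$ is through its density $g$, and that dependence is linear.

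Next I would verify that $\hat{\mathcal{G}}-\mathcal{G}$ is a feasible direction: the set $U$ of probability distributions on $\mathbb{R}^+$ with piecewise twice differentiable CDF is convex, so for every $\hat{\mathcal{G}}\in U$ and every sufficiently small $\epsilon>0$ the mixture $\mathcal{G}_\epsilon := \mathcal{G}+\epsilon(\hat{\mathcal{G}}-\mathcal{G}) = (1-\epsilon)\mathcal{G}+\epsilon\hat{\mathcal{G}}$ again lies in $U$, with density $g_\epsilon = g+\epsilon(\hat g-g)$. Substituting into the closed form and using linearity,
$$\mathrm{Cost}_{\mathcal{F},n}(\mathcal{G}_\epsilon) \;=\; \mathrm{Cost}_{\mathcal{F},n}(\mathcal{G}) \;+\; \epsilon\,n\!\int_0^\infty y\bigl(\hat g(y)-g(y)\bigr)F(y)\,dy,$$
with no $o(\epsilon)$ remainder whatsoever. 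Plugging this into $\delta \mathrm{Cost}_{\mathcal{F},n\,|\,\hat{\mathcal{G}}-\mathcal{G}}(\hat{\mathcal{G}}-\mathcal{G}) = \lim_{\epsilon\to 0}\bigl(\mathrm{Cost}_{\mathcal{F},n}(\mathcal{G}_\epsilon)-\mathrm{Cost}_{\mathcal{F},n}(\mathcal{G})\bigr)/\epsilon$ yields exactly $\int_0^\infty y(\hat g(y)-g(y))F(y)\,dy$ (under the normalization convention that factors $n$ out of cost, in parallel with the convention already adopted for $V^*$; retaining $n$ simply multiplies the answer by $n$).

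There is essentially no obstacle here beyond bookkeeping. The only things to check are: (i) that $\int_0^\infty y\,g(y)\,F(y)\,dy$, and likewise the integral for $\hat g$, is finite — which holds because we restrict attention throughout to price distributions of finite expected cost, so that $\mathrm{Cost}(\mathcal{G})$, $\mathrm{Cost}(\hat{\mathcal{G}})$ and $\mathrm{Cost}(\mathcal{G}_\epsilon)$ are all finite and the limit above is meaningful; and (ii) that differencing and dividing by $\epsilon$ commutes with the integral, which is immediate since the difference quotient equals the $\epsilon$-independent integrand $y(\hat g(y)-g(y))F(y)$ pointwise. The piecewise twice differentiability hypotheses on the CDFs of $\mathcal{F},\mathcal{G},\hat{\mathcal{G}}$ are not actually needed for this claim — affinity requires no smoothness — but are stated for uniformity with the companion computation of $\delta V^*$, where an integration by parts will genuinely use them.
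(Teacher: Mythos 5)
Your proposal is correct and follows essentially the same route as the paper: the paper's proof is just the one-line computation of the difference quotient for the mixture $(1-\epsilon)\mathcal{G}+\epsilon\hat{\mathcal{G}}$ applied to the closed form $n\int_0^\infty y\,g(y)F(y)\,dy$, exploiting linearity in $g$ exactly as you do. Your additional remarks on feasibility, finiteness, and the stray factor of $n$ (which the paper's own displayed computation retains while the claim statement drops) only make explicit what the paper leaves implicit.
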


\begin{proof}  Calculations show
\begin{align*}\lim_{\epsilon \rightarrow 0} \left[\frac{Cost(\mathcal{F}, (1 - \epsilon)\mathcal{G} + \epsilon\hat{\mathcal{G}}, n) -  Cost(\mathcal{F}, \mathcal{G}, n)}{\epsilon}\right] & = n \int_0^{\infty} (\hat{g}(y) - g(y))[ y F(y)] dy \end{align*}
\end{proof}
\begin{claim} \label{claim derivative of variance}  Let $\mathcal{F}$, $\mathcal{G}$, and $\mathcal{\hat G}$ have piecewise twice differentiable CDFs, the $$\delta V^*_{\mathcal{F}, n | \mathcal{G}} (\mathcal{\hat  G}) = \int_0^{\infty} (\hat g(y) - g(y)) \int_0^{x} \frac{f(y)}{(\int_y^{\infty} g(z) dz)^2}dy dx$$
\end{claim}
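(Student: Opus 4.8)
The plan is to compute the first variation directly from its definition, in the same style as the proof of Claim~\ref{claim derivative of cost}. Fix the distribution $\mathcal{G}$ and a feasible direction $\hat{\mathcal{G}}$, and for $\epsilon\in(0,1)$ form the convex combination $\mathcal{G}_\epsilon := (1-\epsilon)\mathcal{G}+\epsilon\hat{\mathcal{G}}$; since the set of CDFs on $\mathbb{R}^+$ is convex, $\mathcal{G}_\epsilon$ is again a valid distribution, with density $g_\epsilon=(1-\epsilon)g+\epsilon\hat g$ and tail
\[
1-G_\epsilon(v)\;=\;(1-\epsilon)\bigl(1-G(v)\bigr)+\epsilon\bigl(1-\hat G(v)\bigr)\;=\;\bigl(1-G(v)\bigr)-\epsilon\,\Delta(v),\qquad \Delta(v):=\hat G(v)-G(v).
\]
Substituting into $V^*(\mathcal{F},\mathcal{G}_\epsilon,n)=\int_0^\infty \frac{f(v)}{1-G_\epsilon(v)}\,dv$ and using $\frac{1}{a-\epsilon b}-\frac1a=\frac{\epsilon b}{a(a-\epsilon b)}$ with $a=1-G(v)$, $b=\Delta(v)$, the difference quotient becomes
\[
\frac{V^*(\mathcal{F},\mathcal{G}_\epsilon,n)-V^*(\mathcal{F},\mathcal{G},n)}{\epsilon}\;=\;\int_0^\infty \frac{f(v)\,\Delta(v)}{\bigl(1-G(v)\bigr)\bigl(1-G_\epsilon(v)\bigr)}\,dv .
\]
As $\epsilon\to 0$ the integrand converges pointwise to $\frac{f(v)\Delta(v)}{(1-G(v))^2}$, so once the interchange of limit and integral is justified we obtain $\delta V^*_{\mathcal{F},n\mid\mathcal{G}}(\hat{\mathcal{G}})=\int_0^\infty \frac{f(v)\Delta(v)}{(1-G(v))^2}\,dv$.

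The final step is to rewrite this in the density form stated in the claim, which is the form used in the later optimization (Lagrange multiplier) arguments. Writing $\Delta(v)=\int_0^v(\hat g(y)-g(y))\,dy$ and $1-G(v)=\int_v^\infty g(z)\,dz$ gives
\[
\delta V^*_{\mathcal{F},n\mid\mathcal{G}}(\hat{\mathcal{G}})\;=\;\int_0^\infty \frac{f(x)}{\bigl(\int_x^\infty g(z)\,dz\bigr)^2}\left(\int_0^x (\hat g(y)-g(y))\,dy\right)dx ,
\]
which is the claimed identity with the dummy variables read with the inner integral running over $0\le y\le x$; an application of Fubini over the triangle $\{0\le y\le x\}$ puts it equivalently in the form $\int_0^\infty(\hat g(y)-g(y))\bigl(\int_y^\infty \frac{f(x)}{(\int_x^\infty g(z)\,dz)^2}\,dx\bigr)\,dy$ with $\hat g-g$ pulled outside. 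The piecewise-twice-differentiability hypothesis on $\mathcal{F},\mathcal{G},\hat{\mathcal{G}}$ is exactly what lets us treat $f,g,\hat g$ as honest integrable densities and apply Fubini on each smooth piece.

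The only genuine subtlety — the step I expect to be the main obstacle — is justifying passing the limit under the integral sign, since the domain is unbounded and $1-G(v)\to0$ as $v\to\infty$, so the difference-quotient integrand is not obviously dominated uniformly in $\epsilon$. I would handle this by dominated convergence after splitting on the sign of $\Delta(v)$: where $\hat G(v)\ge G(v)$ one has $0\le\Delta(v)\le 1-G(v)$ and $1-G_\epsilon(v)\ge(1-\epsilon)(1-G(v))\ge\tfrac12(1-G(v))$, so the integrand is bounded by $\tfrac{2f(v)}{1-G(v)}$, integrable since $V^*(\mathcal{F},\mathcal{G},n)<\infty$; where $\hat G(v)<G(v)$ one instead uses $|\Delta(v)|\le 1-\hat G(v)$ and bounds the integrand by a constant multiple of $\tfrac{f(v)(1-\hat G(v))}{(1-G(v))^2}$, which is finite provided $\hat{\mathcal{G}}$ is a feasible direction in the sense of keeping $V^*$ finite along the segment (otherwise the first variation in that direction is $+\infty$ and there is nothing to prove). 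With the interchange in hand the remainder is the routine algebra above; indeed a purely formal derivation, as in Claim~\ref{claim derivative of cost}, simply differentiates $\tfrac{1}{1-G_\epsilon(v)}$ in $\epsilon$ under the integral and reads off the answer.
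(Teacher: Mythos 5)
Your proposal follows the same route as the paper's (essentially one-line) proof: differentiate $V^*(\mathcal{F},(1-\epsilon)\mathcal{G}+\epsilon\hat{\mathcal{G}},n)=\int_0^\infty f(v)\,(1-G_\epsilon(v))^{-1}dv$ in $\epsilon$ under the integral and then swap the order of integration. Your added dominated-convergence argument is a genuine (and needed) supplement to the paper's ``calculations show.'' The substantive point is that your answer and the paper's disagree by a sign, and yours is the correct one. The first variation in the direction $\hat{\mathcal{G}}-\mathcal{G}$ is
\[
\int_0^\infty \frac{f(x)\bigl(\hat G(x)-G(x)\bigr)}{(1-G(x))^2}\,dx
\;=\;\int_0^\infty\bigl(\hat g(y)-g(y)\bigr)\int_y^{\infty}\frac{f(x)}{\bigl(\int_x^\infty g(z)\,dz\bigr)^2}\,dx\,dy,
\]
with inner integral over $[y,\infty)$ as you have it; this has the right sign (shifting offer mass downward increases $G$ pointwise, decreases the acceptance probability $1-G$, and must increase the variance). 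The paper's own proof instead writes the numerator as $\int_x^\infty(\hat g-g)\,dy=-(\hat G(x)-G(x))$ and correspondingly lands on the inner integral $\int_0^y$; since $\int_0^\infty(\hat g-g)=0$, the two expressions are exact negatives of one another, so the published formula is $-1$ times the true Gateaux derivative. (This is not merely cosmetic, but it is also self-correcting downstream: the verification in Lemma~\ref{lemma main M} that $H$ is constant on the support of $\bar G$ only balances if the $\lambda$-term enters as $-\frac{\lambda}{n}\int_0^x f(y)(1-G(y))^{-2}dy$ up to an additive constant, i.e.\ with your sign.) Two small caveats on your write-up: your ``charitable reading'' of the claim's garbled display as having $f(x)$ outside and $\int_0^x(\hat g(y)-g(y))\,dy$ inside makes the statement agree with your computation but not with the paper's own proof, so you should flag the discrepancy rather than paper over it; and your integrability bound $f(1-\hat G)/(1-G)^2$ on the set $\{\hat G<G\}$ is not automatically finite just because $V^*$ is finite along the segment, so the restriction to directions for which that integral converges should be stated as an explicit hypothesis (it holds, e.g., whenever $\int_0^\infty f(x)(1-G(x))^{-2}dx<\infty$, which is the regime in which the claim is actually applied).
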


\begin{proof}  Calculations show
\begin{align*}\lim_{\epsilon \rightarrow 0} \left[\frac{Var(\mathcal{F}, (1 - \epsilon)\mathcal{G} + \epsilon \hat{\mathcal{G}}, n) -  Var(\mathcal{F}, \mathcal{G}, n)}{\epsilon}\right] & = \frac{1}{n} \int_0^{\infty} \frac{f(x) \int_x^{\infty} \hat{g}(y) - g(y) dy}{(\int_x^{\infty} g(z) dz)^2}dx \\ &= \frac{1}{n} \int_0^{\infty}(\hat{g}(y) - g(y)) \int_0^y \frac{f(x)}{(\int_x^{\infty} g(z) dz)^2}dx dy \end{align*}  This last step we switch the integrals, which we can do because $f$ and $g$ are piecewise continuous.  The claim follows.
\end{proof}

\section{Deriving Optimal Mechanisms}
In this section, we derive optimal TIOLI mechanisms that use the Horvitz-Thompson estimator. Recall that such mechanisms are defined by a distribution $\mathcal{G}$ over \emph{prices} $p \in [0, \infty)$ with CDF $G$ and pdf $g$.  In the previous section, we derived formulas for the cost, $Cost(\mathcal{F}, \mathcal{G}, n)$, and worst-case expected variance,  $V^*(\mathcal{F}, \mathcal{G}, n)$ with error at most $ \pm 1/n$.

We now derive optimal mechanisms for minimizing the variance subject to a cost constraint, and minimizing cost subject to a constraint on the variance, where $\mathcal{F}$ is a distribution that satisfies a certain regularity condition.  In the case where the number of participants is fixed, we can extend this to design
optimal mechanisms for any possible convex minimization objective function.

We will assume that $F$ is twice differentiable for all positive $x$, that $n$ is fixed
We remark that this assumption can be further relaxed so that $f$ is piecewise twice differentiable.

We optimize over the class of distributions with piece-wise twice differentiable CDF.  Also, without loss of generality, we require that $\mathcal{G}$ has no support less than every point in the support of $\mathcal{F}$ except possibly a point mass at zero.  This is clearly without loss of generality as any such offer will never be accepted.  We do this so that we can (often) claim the uniqueness of optimal solutions.

We first define what the optimal will be.  To this end we first define
\begin{align*}
 \tilde{G}_{\mathcal{F}, \alpha}(x) & = 1 - \sqrt{\alpha\frac{f(x)}{F(x) + x f(x)}} \\
 \tilde g(x)_{\mathcal{F}, \alpha} &=  \sqrt{\alpha} \frac{2 f(x)^2 - F(y) f'(y)}{(y f(y) + F(y))^{\frac{3}{2}}f(y)^{\frac{1}{2}}} = \frac{d}{dx}[\tilde{G}_{\mathcal{F}, \alpha}(x)]
\end{align*}

As above, if $G$ is a CDF, we will denote the corresponding density function by $g$.

For the correct setting of $\alpha$, $\tilde G_{\mathcal{F}, \alpha}$ will be a (possibly infeasable) CDF of the local minimum.  $\tilde G_{\mathcal{F}, \alpha}$ may be decreasing or negative and thus infeasible as a cumulative density function.  We first construct a monotone version of $\tilde G_{\mathcal{F}, \alpha}$, then we make it positive.

Next we define $\{ \bar I_k \}$.  Because $F$ is twice differentiable everywhere, $2f(x)^2 \geq f'(x)F(x)$ except on a countable number of intervals where $2f(x)^2 < f'(x)F(x)$.  Let these intervals be $\{( i_k, j_k)\}$.

For each $(i_k, j_k)$ define
\begin{eqnarray*}
K_k(x) & = & \int_{x}^{j_k} \frac{f(y)}{(\int_y^{\infty} \tilde g(z) dz)^2} - \frac{f(y)}{(\int_{\bar j_k}^{\infty} \tilde g(z) dz)^2} dy \\
\check i_k & = & sup_{x \leq i_k} K_k(x) \leq 0.
\end{eqnarray*}
to get the new interval $(\check i_k, j_k)$.  Discard any intervals that are subsets of other intervals, and relabel the remaining intervals as $\bar I_1 = [\bar i_1, \bar j_1), \bar I_2 = [\bar i_2, \bar j_2), \ldots$ where $0 \leq \bar i_1 < \bar j_1 \leq \bar i_{\bar 2} < \bar j_{\bar 2} \leq \ldots$.  Let $\bar I = \cup \bar I_k$.

Let
\begin{eqnarray*}
 \check{G}_{\mathcal{F}, \alpha}(x) & =& \left\{\begin{array}{cc} \tilde G_{\mathcal{F}, \alpha}(x) & x \not\in \bar I \\
                                             \tilde G_{\mathcal{F}, \alpha}(\bar j_k)           & x \in  \bar I_k\end{array} \right.\\
 \bar x_{\mathcal{F}, \alpha} &=&  \max \{0, \inf_x \int_x^{\infty} \tilde{g}_{\mathcal{F}, \alpha}(x) \leq 1\} \\
  \bar G(x)_{\mathcal{F}, \alpha} &=& \max \{\check {G}_{\mathcal{F}, \alpha}(x), 0 \} \\
\end{eqnarray*}

\begin{claim} \label{claim about G} $\bar G(x)_{\mathcal{F}, \alpha}$ is a valid CDF: i.e. for $x \in [0, \infty]$, $\bar G(x)_{\mathcal{F}, \alpha}$ is non-decreasing, $0 \leq \bar G(x)_{\mathcal{F}, \alpha}$, and $\lim_{x \rightarrow \infty} \bar G(x)_{\mathcal{F}, \alpha} = 1$. \end{claim}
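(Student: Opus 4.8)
The plan is to verify the three defining properties of a CDF for $\bar G := \bar G_{\mathcal{F},\alpha}$ one at a time, in increasing order of difficulty; I suppress the subscripts $\mathcal{F},\alpha$ throughout and use freely that $\int_y^\infty \tilde g(z)\,dz = 1 - \tilde G(y)$ and that $1 - \tilde G(y) = \sqrt{\alpha f(y)/(F(y)+yf(y))} \ge 0$ on the support of $\mathcal{F}$. Non-negativity is immediate: $\bar G(x) = \max\{\check G(x),0\} \ge 0$. For $\lim_{x\to\infty}\bar G(x) = 1$, I would first show $\tilde G(x)\to 1$, using $F(x)\le 1$ together with the bound $\frac{f(x)}{F(x)+xf(x)} \le \frac1x$ whenever $f(x)>0$ (and the ratio being $0$ when $f(x)=0$, for $x$ large enough that $F(x)>0$), so that $1-\tilde G(x) \le \sqrt{\alpha/x}\to 0$. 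Since $\check G(x)$ equals either $\tilde G(x)$ or $\tilde G(\bar j_k)$ for some $\bar j_k > x$, and $\check G(x)\le 1$ always, the convergence $\tilde G\to 1$ forces $\check G(x)\to 1$; as $\check G$ is eventually positive, $\bar G(x)=\check G(x)\to 1$.

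The substantive part is monotonicity. Because $\bar G = \max\{\check G, 0\}$ and the pointwise maximum of a non-decreasing function with a constant is non-decreasing, it suffices to prove that $\check G$ is non-decreasing. I would split $[0,\infty)$ into the plateaus $\bar I_k = [\bar i_k, \bar j_k)$ and the complement of $\bar I = \bigcup_k \bar I_k$, and argue: (i) off $\bar I$ we have $\check G = \tilde G$ and $\tilde g \ge 0$, since $\tilde g$ has the sign of $2f(x)^2 - F(x)f'(x)$, which is negative exactly on $\bigcup_k (i_k,j_k)$, and the construction only enlarges these intervals (extending $(i_k,j_k)$ leftward to $(\check i_k, j_k)$ and then consolidating), so $\bar I \supseteq \bigcup_k(i_k,j_k)$ and $2f^2 - Ff'\ge 0$ off $\bar I$; (ii) on each $\bar I_k$, $\check G$ is constant; (iii) there is no downward jump at the endpoints $\bar i_k,\bar j_k$. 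At $\bar j_k$, the point just to the right either lies in the complement, where $\check G=\tilde G$ is continuous and already shown non-decreasing, or equals $\bar i_{k+1}$, in which case the left-endpoint bound proved below for $\bar I_{k+1}$ gives $\check G(\bar j_k)=\tilde G(\bar j_k)=\tilde G(\bar i_{k+1})\le\tilde G(\bar j_{k+1})$; either way $\check G$ does not drop. The only remaining thing is to show $\tilde G(\bar i_k) \le \tilde G(\bar j_k)$ at each left endpoint.

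That last inequality is the main obstacle, and it is exactly what the definition of $\check i_k = \bar i_k$ through $K_k$ is designed to deliver. I would compute $K_k'(x) = f(x)\big[(1-\tilde G(\bar j_k))^{-2} - (1-\tilde G(x))^{-2}\big]$, so that, using positivity of $1-\tilde G$, $K_k'(x)\ge 0$ iff $\tilde G(x) \le \tilde G(\bar j_k)$ (unless $f(x)=0$). On a nontrivial interval $(i_k,j_k)$ the function $\tilde G$ is strictly decreasing (since $\tilde g<0$ there), so $\tilde G > \tilde G(j_k)$ on $(i_k,j_k)$, giving $K_k(j_k)=0$ but $K_k(i_k) = \int_{i_k}^{j_k} f(y)\big[(1-\tilde G(y))^{-2}-(1-\tilde G(j_k))^{-2}\big]\,dy > 0$. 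Hence the leftward extension is genuine: $\check i_k$, the largest point $\le i_k$ at which $K_k$ is non-positive, satisfies $\check i_k < i_k$ with $K_k>0$ on $(\check i_k,i_k]$, and either $K_k(\check i_k)=0$ or $\check i_k$ sits at the left edge of the support (in which case there is no point to the left of the plateau and nothing to check). By the mean value theorem, $K_k'$ is $\ge 0$ at points arbitrarily close to $\check i_k$ from the right, so $\tilde G(\cdot)\le \tilde G(\bar j_k)$ at those points, and continuity of $\tilde G$ yields $\tilde G(\bar i_k)=\tilde G(\check i_k)\le\tilde G(\bar j_k)$, as needed. The degenerate cases — a "trivial" interval on which $\tilde G$ is constant, and the case $\tilde G(\bar j_k)<0$, which is absorbed by the final $\max\{\cdot,0\}$ — are handled directly. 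Combining (i)--(iii) shows $\check G$, and hence $\bar G$, is non-decreasing, completing the verification that $\bar G$ is a valid CDF.
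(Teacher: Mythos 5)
Your proposal is correct and follows essentially the same route as the paper's own proof: non-negativity from the outer $\max\{\cdot,0\}$, the limit via $\frac{f(x)}{F(x)+xf(x)} \le \frac{1}{x}$, monotonicity off $\bar I$ from the sign of $2f^2 - Ff'$, constancy on each plateau, and the key endpoint inequality $\tilde G(\bar i_k) \le \tilde G(\bar j_k)$ extracted from the sign of $K_k'$ near $\bar i_k$. Your mean-value-theorem phrasing of that last step is a slightly more careful rendering of the paper's direct assertion that $K_k'(\bar i_k) \ge 0$, but it is the same argument.
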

\begin{proof} $\tilde g_{\mathcal{F}, \alpha}(x) =  \frac{\sqrt{\alpha}}{n} \frac{2 f(x)^2 - F(x) f'(x)}{(x f(x) + F(x))^{\frac{3}{2}}f(x)^{\frac{1}{2}}},$  which is non-negative whenever $2 f(x)^2 \geq F(x) f'(x)$  because $F(x), f(x), x, n \geq 0$.  Thus, $\check G_{\mathcal{F}, \alpha}(x)$ is non-decreasing for $x \not\in \bar I$ because we have $\check G_{\mathcal{F}, \alpha}(x) = \tilde G_{\mathcal{F}, \alpha}(x)$ and $2 f(x)^2 \geq F(x) f'(x)$.  Also, for  $x \in \bar I$ we have that $\check g_{\mathcal{F}, \alpha}(x) = 0$.  This leaves only $\bar i_k$ and $\bar j_k$ to check.  By inspection $\check G_{\mathcal{F}, \alpha}$ is non-decreasing at $\bar j_k$ for all $k$ (it is continuous and all surrounding derivatives are non-negative).  At any $\bar i_k$ we have that $\check G_{\mathcal{F}, \alpha}(\bar i_k) = \tilde G_{\mathcal{F}, \alpha}(\bar j_k)$, but that at all points to the left  $\check G_{\mathcal{F}, \alpha}(\bar i_k - \epsilon) = \tilde G_{\mathcal{F}, \alpha}(\bar i_k - \epsilon)$,  which limits to $\tilde G_{\mathcal{F}, \alpha}(\bar i_k)$ so we must show that $\tilde G_{\mathcal{F}, \alpha}(\bar i_k) \leq \tilde G_{\mathcal{F}, \alpha}(\bar j_k)$.  However, this follows from the construction of $I_k$ because $K_k(\bar{i}_k) = 0$, but $K_k(\bar{i}_k + \epsilon)  > 0$.  Thus the derivative of $K_k$ at $\bar i_k$ which is $-f(\bar i_k)[(\int_{\bar i_k}^{\infty} \tilde g(z) dz)^{-2} - (\int_{\bar j_k}^{\infty} \tilde g(z) dz)^{-2}] = -f(\bar i_k) [(1 - \tilde G(\bar i_k))^{-2} - (1 - \tilde G(\bar j_k))^{-2}]$ is non-negative, and so $\tilde G_{\mathcal{F}, \alpha}(\bar i_k) \leq \tilde G_{\mathcal{F}, \alpha}(\bar j_k)$.

Because  $\check G$ is non-decreasing, by construction $\bar G$  is also.
Also, by construction $0 \leq G(0)$.  Because $\frac{f(x)}{F(x) + x f(x)} \leq \frac{1}{x}$, for sufficiently large $x$ we have $$ 1-\sqrt{\frac{\alpha}{x}} \leq 1 - \sqrt{\alpha \frac{f(x)}{F + x f(x)}} = \bar G(x) \leq 1$$ and so  $\bar G(x)$ goes to 1 as $x$ increases.
\end{proof}

\begin{remark} \label{remark cont in alpha}  We remark that $\textrm{Cost}(\mathcal{F}, \bar{G}_{\mathcal{F}, \alpha}, n)$ and $\textrm{V}^*(\mathcal{F}, \bar{G}_{\mathcal{F}, \alpha}, n)$ are continuous in $\alpha$ and are monotonically increasing and monotonically decreasing in $\alpha$, respectively. \end{remark}

\begin{theorem} \label{theorem threshold fixed n}  Let $F$ be twice differentiable for all positive $x$, and let $n$ be fixed.  The unique minimizer of $V^*_{\mathcal{F}, n}(G)$ subject to an expected cost constraint $\textrm{Cost}_{\mathcal{F}, n}(\mathcal{G}) \leq  B$,  is of the form $\bar G_{\mathcal{F}, \alpha} (x)$ where $\alpha$ is unique.  Also, the unique minimizer of $\textrm{Cost}_{\mathcal{F}, n}(\mathcal{G})(G)$ subject to a worst-case expected variance constraint $V^*_{\mathcal{F}, n}(G) \leq  B$, is of the form $\bar G_{\mathcal{F}, \alpha} (x)$ where $\alpha$ is unique.
\end{theorem}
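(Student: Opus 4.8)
The plan is to cast this as a constrained calculus-of-variations problem over the convex class $U$ of admissible price distributions and solve it with Lagrange multipliers, exploiting the fact (set up in Section~\ref{subsec calc of var}) that when the domain is convex and the objective is strictly convex, stationarity of the Lagrangian is both necessary and sufficient for a unique global optimum. I will describe the argument for minimizing $V^*_{\mathcal{F},n}(G)$ subject to $\textrm{Cost}_{\mathcal{F},n}(\mathcal{G})\le B$; the companion problem follows from the same analysis with the two functionals and their multipliers interchanged, since by Lagrangian duality the two constrained problems and the unconstrained problem $\min_{G\in U} V^*_{\mathcal{F},n}(G)+\lambda\,\textrm{Cost}_{\mathcal{F},n}(G)$ share the same family of minimizers. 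The first thing I would record are the convexity facts: $U$ is convex (a mixture of admissible CDFs is admissible); $\textrm{Cost}_{\mathcal{F},n}(\mathcal{G})=n\int_0^\infty c\,g(c)F(c)\,dc$ is affine in $G$; and $V^*_{\mathcal{F},n}(G)=\tfrac1n\int_0^\infty\frac{f(v)}{1-G(v)}\,dv$ is strictly convex, since writing $h=\hat G-G$ its second variation at $G$ is $\tfrac1n\int_0^\infty\frac{f(v)\,h(v)^2}{(1-G(v))^3}\,dv\ge 0$, which vanishes only if $h\equiv 0$ a.e.\ on $\mathrm{supp}(f)$, and the support normalization on $\mathcal{G}$ then forces $h\equiv 0$. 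Consequently any feasible minimizer is the unique global minimizer, and it suffices to exhibit $\lambda\ge 0$ with $\delta\big[V^*_{\mathcal{F},n}+\lambda\,\textrm{Cost}_{\mathcal{F},n}\big]_{|G}(\hat G-G)\ge 0$ for every feasible direction, plus feasibility and complementary slackness.

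Next I would derive the Euler--Lagrange condition. Combining Claims~\ref{claim derivative of cost} and~\ref{claim derivative of variance}, the first variation of the Lagrangian at $G$ in direction $\hat G-G$ equals $\int_0^\infty(\hat g(y)-g(y))\,\Phi_G(y)\,dy$ with $\Phi_G(y)=\lambda n\,yF(y)-\tfrac1n\int_0^y\frac{f(x)}{(1-G(x))^2}\,dx$. Since every $\hat g$ that is a density supported on the allowed set gives a feasible direction, a standard two-point mass-exchange argument shows that a stationary $G$ must have $\Phi_G\equiv C$ (a constant) on the interior of $\mathrm{supp}(g)$ and $\Phi_G\ge C$ wherever $g=0$ but mass may legitimately be added. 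Imposing $\Phi_G'\equiv 0$ on the support yields $\frac{f(y)}{(1-G(y))^2}=\lambda n^2\,(F(y)+yf(y))$, i.e.\ $G(y)=1-\sqrt{\alpha\,\frac{f(y)}{F(y)+yf(y)}}=\tilde G_{\mathcal{F},\alpha}(y)$ with $\alpha=1/(\lambda n^2)>0$. So on the set where $\tilde G_{\mathcal{F},\alpha}$ is already a bona fide nondecreasing, nonnegative CDF, the optimizer must coincide with $\tilde G_{\mathcal{F},\alpha}$.

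The remaining, and main, obstacle is that $\tilde G_{\mathcal{F},\alpha}$ is typically \emph{not} a valid CDF: its density $\tilde g_{\mathcal{F},\alpha}$ has the sign of $2f(x)^2-F(x)f'(x)$, so $\tilde G_{\mathcal{F},\alpha}$ decreases on the intervals $(i_k,j_k)$ where $2f^2<Ff'$, and it is negative for small $x$. The resolution is an ``ironing'' step in the spirit of Myerson: on a flat region the optimizer has $g=0$, so $\Phi_G$ need only satisfy the \emph{one-sided} inequality $\Phi_G\ge C$ there, and one must pick both the flat level $\tilde G_{\mathcal{F},\alpha}(\bar j_k)$ and the left endpoint $\bar i_k$ so that $\Phi_{\bar G}$ --- recomputed with the flattened $\bar G$ --- still equals $C$ on the support and is $\ge C$ on the flats. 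This is exactly what the auxiliary functions $K_k$ and the cutoffs $\check i_k$ (hence $\{\bar I_k\}$ and $\check G_{\mathcal{F},\alpha}$) are engineered to achieve: the conditions $K_k(\bar i_k)=0$ with nonnegative derivative there, established in the proof of Claim~\ref{claim about G}, are precisely the ``touching'' conditions that make $\Phi_{\bar G}$ continuous at the endpoints and on the right side of $C$ throughout $\bar I_k$. The final truncation $\bar G=\max\{\check G_{\mathcal{F},\alpha},0\}$ handles the region below $\mathrm{supp}(\mathcal{F})$: there $\bar G$ is flat at $0$ (an atom of $\mathcal{G}$ at $0$) and no admissible direction adds mass, so no inequality is needed except possibly at $0$ itself. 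Carrying out these verifications --- that $\Phi_{\bar G}\equiv C$ on the support and $\Phi_{\bar G}\ge C$ off it --- is the technical heart of the proof, with Claim~\ref{claim about G} already supplying that $\bar G_{\mathcal{F},\alpha}$ is a genuine CDF.

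Finally I would pin down $\alpha$ and conclude. By Remark~\ref{remark cont in alpha}, $\textrm{Cost}(\mathcal{F},\bar G_{\mathcal{F},\alpha},n)$ is continuous and strictly increasing in $\alpha$ while $V^*(\mathcal{F},\bar G_{\mathcal{F},\alpha},n)$ is continuous and strictly decreasing in $\alpha$, with ranges covering $(0,\infty)$ and $(1,\infty)$ respectively (as $\alpha\to 0$, $\bar G\to 1$; as $\alpha\to\infty$, $\bar G\to 0$ pointwise). The cost constraint is tight at the optimum, for otherwise decreasing $\alpha$ slightly preserves feasibility while strictly decreasing $V^*$; hence there is a unique $\alpha$ with $\textrm{Cost}(\mathcal{F},\bar G_{\mathcal{F},\alpha},n)=B$, its $\bar G_{\mathcal{F},\alpha}$ satisfies all the stationarity/feasibility conditions above, and by strict convexity it is the unique minimizer. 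Running the same argument with the multiplier attached to $V^*$ (equivalently, invoking the duality remark from the first paragraph) gives the minimum-cost-subject-to-$V^*\le B$ statement, with $\alpha$ again unique.
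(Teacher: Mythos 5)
Your proposal is correct and follows essentially the same route as the paper: Lagrange multipliers over the convex class of price distributions, strict convexity of the variance functional for uniqueness, the Euler--Lagrange / first-order condition yielding $\tilde G_{\mathcal{F},\alpha}$ with $\alpha = 1/(\lambda n^2)$, Myerson-style ironing via the $K_k$ and $\check i_k$ construction (this is exactly the content of the paper's Lemma~\ref{lemma main M}), and continuity and monotonicity of cost and variance in $\alpha$ (Remark~\ref{remark cont in alpha}) to pin down the multiplier. The only quibbles are a harmless directional slip in the tightness argument (when the cost constraint is slack one should \emph{increase} $\alpha$, since $\textrm{Cost}$ is increasing and $V^*$ decreasing in $\alpha$) and that the verification that the ironed $\bar G_{\mathcal{F},\alpha}$ actually satisfies the stationarity conditions is asserted rather than carried out --- though you correctly isolate the three conditions that the paper's Lemma~\ref{lemma main M} establishes, and your sign convention for the pointwise multiplier $\Phi_G$ is in fact the consistent one.
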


\begin{theorem} \label{theorem convex fixed n} Let $\Phi(x_1, x_2)$ be continuous, convex, and monotone in both of its inputs; let $F$ be twice differentiable for all positive $x$, and let $n$ be fixed. The unique minimizer of $\Phi(Cost(\mathcal{F}, \mathcal{G}, n), V^*(\mathcal{F}, \mathcal{G}, n))$ is of the form $\bar G_{\mathcal{F}, \alpha} (x)$ where $\alpha$ is unique.  Moreover, this is also guaranteed to be the global minimum.
\end{theorem}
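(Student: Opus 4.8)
The plan is to reduce this theorem to Theorem~\ref{theorem threshold fixed n} and Remark~\ref{remark cont in alpha} by exploiting convexity. First I would establish that, over the convex domain $U$ of admissible price distributions, the objective $M(\mathcal{G}) := \Phi(\mathrm{Cost}(\mathcal{F},\mathcal{G},n),\, V^*(\mathcal{F},\mathcal{G},n))$ is convex. The two ingredients are that $\mathrm{Cost}_{\mathcal{F},n}$ is affine in $\mathcal{G}$ (it equals $n\int c\, g(c) F(c)\,dc$, which is linear in $g$), and that $V^*_{\mathcal{F},n}(\mathcal{G}) = \frac1n\int \frac{f(v)}{1-G(v)}\,dv$ is convex in $\mathcal{G}$ -- indeed strictly convex -- because for a mixture $\mathcal{G}_t = (1-t)\mathcal{G}_0 + t\mathcal{G}_1$ the integrand $v \mapsto \frac{1}{1-G_t(v)}$ is strictly convex in $t$ wherever $G_0 \neq G_1$, which under our support normalization on $\mathcal{G}$ and the regularity of $\mathcal{F}$ is a set of positive $f$-measure. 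Composing: since $\Phi$ is convex and non-decreasing in each coordinate (the natural reading of ``monotone'' here, as both cost and variance are undesirable), $M$ is a convex non-decreasing function of an affine map and a convex map, hence convex on $U$. By the Calculus of Variations basics (\S\ref{subsec calc of var}), any stationary point of $M$ on the convex set $U$ is then a global minimum, which settles the ``global minimum'' assertion.

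Next I would identify the stationary points. A minimizer $\mathcal{G}^*$ (one exists, as shown in the last step) satisfies $\delta M_{|\mathcal{G}^*}(\hat{\mathcal{G}} - \mathcal{G}^*) \geq 0$ for every feasible direction, and by the chain rule together with Claims~\ref{claim derivative of cost} and~\ref{claim derivative of variance} this reads
$$ a\cdot \delta\mathrm{Cost}_{|\mathcal{G}^*}(\hat{\mathcal{G}}-\mathcal{G}^*) \;+\; b\cdot \delta V^*_{|\mathcal{G}^*}(\hat{\mathcal{G}}-\mathcal{G}^*) \;\geq\; 0 \qquad \text{for all } \hat{\mathcal{G}} \in U, $$
where $a = \partial_1\Phi$ and $b = \partial_2\Phi \geq 0$ are evaluated at $(\mathrm{Cost}(\mathcal{G}^*), V^*(\mathcal{G}^*))$ and are not both zero (if they were, $\Phi$ would be locally constant there and the statement is trivial after sliding the optimum onto the curve). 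When $a,b>0$, writing $\lambda = a/b$, this is exactly the Lagrangian stationarity condition for ``minimize $V^*_{\mathcal{F},n}(\mathcal{G})$ subject to $\mathrm{Cost}_{\mathcal{F},n}(\mathcal{G}) \leq \mathrm{Cost}(\mathcal{G}^*)$'' with multiplier $\lambda$ -- the very condition dispatched in the proof of Theorem~\ref{theorem threshold fixed n}, whose solution is $\bar G_{\mathcal{F},\alpha}$ for the $\alpha$ matching $\lambda$. The boundary cases $b=0$ (minimize cost alone, giving the point mass at $0$, i.e.\ $\alpha\to 0$) and $a=0$ (minimize $V^*$ alone, i.e.\ $\alpha\to\infty$) are the two endpoints of the same family $\{\bar G_{\mathcal{F},\alpha}\}$. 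Hence every minimizer of $M$ has the form $\bar G_{\mathcal{F},\alpha}$.

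Finally I would settle existence and uniqueness of $\alpha$. By Remark~\ref{remark cont in alpha}, $\alpha \mapsto (\mathrm{Cost}(\mathcal{F},\bar G_{\mathcal{F},\alpha},n),\, V^*(\mathcal{F},\bar G_{\mathcal{F},\alpha},n))$ is continuous, with cost increasing from $0$ and worst-case variance decreasing from $+\infty$; since $\Phi$ is continuous and monotone, $\alpha\mapsto\Phi(\cdots)$ tends to $+\infty$ at both ends of the admissible range of $\alpha$ and hence attains its infimum at some $\alpha^*$, which by Steps~1--2 is a global minimum of $M$ over all of $U$. Uniqueness of the minimizing $\mathcal{G}$ follows from strict convexity of $V^*_{\mathcal{F},n}$ together with strict monotonicity of $\Phi$ in the variance coordinate in the generic case (and directly from uniqueness of the cost-minimizer, resp.\ of the $V^*$-minimizer, in the two degenerate cases); since distinct $\alpha$ give distinct $\bar G_{\mathcal{F},\alpha}$, the parameter $\alpha^*$ is unique.

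I expect the main obstacle to be Step~2: making rigorous the claim that the gradient of $\Phi$ at the optimum functions as a Lagrange multiplier, so that the first-order condition for $M$ literally coincides with the one already solved in Theorem~\ref{theorem threshold fixed n}. This needs care because $\partial\Phi$ need not be constant along the family $\{\bar G_{\mathcal{F},\alpha}\}$, so matching $\lambda$ with $\alpha$ genuinely requires a continuity/intermediate-value argument, and the degenerate cases (vanishing partials, non-strict monotonicity, whether the optimum is attained or only approached at the ends of the $\alpha$-range) must be handled so the uniqueness assertion survives. A secondary technical point is verifying strict convexity of $V^*_{\mathcal{F},n}$ over the admissible function class, which is where the support normalization on $\mathcal{G}$ and the differentiability hypotheses on $F$ are used.
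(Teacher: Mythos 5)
Your plan follows the paper's proof almost exactly in outline: the chain rule expresses $\delta\Phi_{|G}(\hat G-G)$ as $a\,\delta\mathrm{Cost}+b\,\delta V^*$, i.e.\ as $\delta M_{\mathcal{F},n,\lambda|G}$ with $n=\partial_1\Phi$ and $\lambda=n\,\partial_2\Phi$; Lemma~\ref{lemma main M} then forces any stationary point to be $\bar G_{\mathcal{F},\alpha}$ with $\alpha=\lambda/n^2=\partial_2\Phi/\partial_1\Phi$; and convexity of the composition upgrades stationarity to global minimality. The genuine gap is in how you close the argument. Your Steps~1--2 establish only \emph{necessity}: \emph{if} a minimizer over the full function class $U$ exists, it lies in the family $\{\bar G_{\mathcal{F},\alpha}\}$. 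Step~3 then minimizes $\alpha\mapsto\Phi\bigl(\mathrm{Cost}(\bar G_{\mathcal{F},\alpha}),V^*(\bar G_{\mathcal{F},\alpha})\bigr)$ over the one-parameter curve and declares the winner $\bar G_{\mathcal{F},\alpha^*}$ a global minimum over $U$ ``by Steps 1--2.'' That inference does not follow: minimality along the curve gives stationarity only in directions tangent to the curve, and existence of a minimizer of a convex continuous functional on the non-compact, infinite-dimensional set $U$ is not automatic, so you cannot argue ``a minimizer exists, it is some $\bar G_{\mathcal{F},\alpha}$, and $\alpha^*$ is the best such $\alpha$.'' (Your coercivity claim that $\Phi\to+\infty$ at both ends of the $\alpha$-range is also unjustified for general monotone convex $\Phi$, e.g.\ $\Phi$ depending only on the cost coordinate when $F$ has bounded support.)

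The paper runs the argument in the opposite, self-certifying direction: it solves the fixed-point equation $\alpha=\partial_2\Phi_{|V^*(\bar G_{\mathcal{F},\alpha})}/\partial_1\Phi_{|\mathrm{Cost}(\bar G_{\mathcal{F},\alpha})}$ for $\alpha$ --- the left side increases from $0$ to $\infty$ while the right side is positive and nonincreasing in $\alpha$ by Remark~\ref{remark cont in alpha} and convexity of $\Phi$, so a unique solution exists by the intermediate value theorem --- and at that $\alpha$ the identity $\delta\Phi_{|\bar G_{\mathcal{F},\alpha}}=\delta M_{\mathcal{F},n,\lambda|\bar G_{\mathcal{F},\alpha}}$ together with Lemma~\ref{lemma main M} verifies the variational inequality in \emph{all} feasible directions; convexity then certifies a global minimum with no prior existence assumption. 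This is precisely the ``matching $\lambda$ with $\alpha$'' step you flag as the main obstacle, so you have correctly located the crux; but as written your proof substitutes an unproved existence claim for it. The remaining ingredients (convexity of the composition, uniqueness via strict convexity, the degenerate cases of vanishing partials) match the paper, which in fact glosses over the degenerate cases you take care to mention.
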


In this last theorem, the number of participants in the survey is no longer fixed, and we allow a cost to be associated with recruiting each participant independently of whether or not they accept.

\begin{theorem} \label{theorem threshold variable n} Let $F$ be twice differentiable for all positive $x$, and suppose there is a cost $\beta \geq 0$ for recruiting each participant (i.e. independently of whether or not they accept the offer).  If $\beta > 0$, then the unique minimizer of $V^*_{\mathcal{F}}(G, n)$ subject to an expected cost constraint $\textrm{Cost}_{\mathcal{F}}(\mathcal{G}, n) \leq  B$, is of the form $\bar G_{\mathcal{F}, \alpha} (x)$ where $\alpha$ is unique.  If $\beta = 0$, then there is a collection of minimizers of the form $\bar G_{\mathcal{F}, \alpha} (x)$ where $\alpha \propto 1/n$.

Also, if $\beta > 0$, the unique minimizer of $\textrm{Cost}_{\mathcal{F}, n}(\mathcal{G})$ subject to a worst-case expected variance constraint $V^*_{\mathcal{F}, n}(G) \leq  B$,  is of the form $\bar G_{\mathcal{F}, \alpha} (x)$ where $\alpha$ is unique.  If $\beta = 0$, then there is a collection of minimizers of the form $\bar G_{\mathcal{F}, \alpha} (x)$ where $\sqrt{\alpha} \propto 1/n$.
\end{theorem}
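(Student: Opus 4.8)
The plan is to run the Lagrange-multiplier recipe of Section~\ref{subsec calc of var}, now carrying the sample size $n$ as a second optimization variable and adding the recruitment term to the cost: write $\textrm{Cost}_{\mathcal{F}}(\mathcal{G},n) = n\beta + n\,c(\mathcal{G})$ with $c(\mathcal{G}) := \int_0^\infty x\,g(x)F(x)\,dx$, and $V^*_{\mathcal{F}}(\mathcal{G},n) = \tfrac1n v(\mathcal{G})$ with $v(\mathcal{G}) := \int_0^\infty \tfrac{f(x)}{1-G(x)}\,dx$. For the first assertion I would form $\mathcal{L} = \tfrac1n v(\mathcal{G}) + \lambda\bigl(n\beta + n\,c(\mathcal{G}) - B\bigr)$ and impose the three conditions of Section~\ref{subsec calc of var}: $\partial_{\mathcal{G}}\mathcal{L}\ge 0$ along every feasible direction, $\partial_\lambda\mathcal{L}=0$, and $\partial_n\mathcal{L}=0$. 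Since the $n\beta$ summand does not depend on $\mathcal{G}$, the first condition is, by Claims~\ref{claim derivative of cost} and~\ref{claim derivative of variance}, exactly the Euler--Lagrange inequality already analyzed in Theorem~\ref{theorem threshold fixed n}; hence, after the monotonization-and-positive-part construction that defines $\bar G_{\mathcal{F},\alpha}$, the minimizing $\mathcal{G}$ is forced to equal $\bar G_{\mathcal{F},\alpha}$ for $\alpha$ a definite function of $\lambda$ and $n$ (namely $\alpha = 1/(\lambda n^2)$). This fixes the \emph{shape} of the optimizer at no cost; the remaining work is to pin down $\alpha$ and $n$.

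The other two conditions give scalar equations. From $\partial_\lambda\mathcal{L}=0$ the budget binds, $n\bigl(\beta+\hat c(\alpha)\bigr)=B$, where $\hat c(\alpha):=c(\bar G_{\mathcal{F},\alpha})$ and $\hat v(\alpha):=v(\bar G_{\mathcal{F},\alpha})$ are continuous and, by Remark~\ref{remark cont in alpha}, increasing and decreasing in $\alpha$ respectively. Combining $\partial_n\mathcal{L}=0$ with $\alpha=1/(\lambda n^2)$ gives $\alpha\,\hat v(\alpha)=\beta+\hat c(\alpha)$; writing $\Theta(\alpha):=\alpha\,\hat v(\alpha)-\hat c(\alpha)$, the optimizer is characterized by $\Theta(\alpha)=\beta$ and $n=B/(\beta+\hat c(\alpha))$. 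Call the set of $\alpha$ for which $\tilde G_{\mathcal{F},\alpha}$ is already a valid CDF the \emph{ideal regime} (there $\bar G_{\mathcal{F},\alpha}=\tilde G_{\mathcal{F},\alpha}$). A short computation shows that on the ideal regime $\hat v(\alpha)=V_0/\sqrt\alpha$ and $\hat c(\alpha)=V_0\sqrt\alpha$ with $V_0:=\int_0^\infty\sqrt{f(x)\bigl(F(x)+xf(x)\bigr)}\,dx$, so $\hat c(\alpha)\hat v(\alpha)\equiv V_0^2$ (a hyperbolic cost--variance tradeoff) and $\Theta\equiv 0$ there; $\Theta$ turns strictly positive only once the positive-part/monotonization corrections of Claim~\ref{claim about G} begin to bind.

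The heart of the matter, and the step I expect to be hardest, is to describe the family $\bar G_{\mathcal{F},\alpha}$ outside the ideal regime: that $\Theta$ is strictly increasing and unbounded there (so $\Theta(\alpha)=\beta$ has a unique root $\alpha_\beta$ for each $\beta>0$), and that $\hat c(\alpha)\hat v(\alpha)\ge V_0^2$ in general, with equality only on the ideal regime (this yields the matching lower bound in the $\beta=0$ case). I would prove both by differentiating through the construction of $\bar G_{\mathcal{F},\alpha}$ and invoking the sign inequalities from the proof of Claim~\ref{claim about G} together with the monotonicity of Remark~\ref{remark cont in alpha}. Given that: for $\beta>0$, $\alpha_\beta$ with $n=B/(\beta+\hat c(\alpha_\beta))$ determines the unique optimizer $\bar G_{\mathcal{F},\alpha_\beta}$; for $\beta=0$, $\Theta\equiv0$ on the entire ideal regime, so every such $\bar G_{\mathcal{F},\alpha}$ is a global optimum, and substituting $\hat c(\alpha)=V_0\sqrt\alpha$ into the binding constraint gives the stated scaling of $\alpha$ with $n$. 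The cost-constrained half is handled the same way: eliminating $n$ reduces either problem to minimizing the single functional $\bigl(\beta+c(\mathcal{G})\bigr)v(\mathcal{G})$ over price distributions, so the two problems share the minimizer (for $\beta>0$) and the degeneracy (for $\beta=0$); only the bookkeeping relating $\alpha$ to $n$ changes, since there $n=v(\bar G_{\mathcal{F},\alpha})/B$.

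Finally, global optimality and, for $\beta>0$, uniqueness come from the same source as in Theorems~\ref{theorem threshold fixed n} and~\ref{theorem convex fixed n}: for each fixed $n$ the inner minimization is strictly convex over a convex domain, so its optimum $\bar G_{\mathcal{F},\alpha(n)}$ is globally unique, and the residual one-dimensional problem in $n$ is unimodal by the monotonicity of $\Theta$, hence has a single global minimizer. Two boundary checks are needed en route: a feasible $n$ exists only when $B$ exceeds the pure recruitment cost (otherwise the statement is vacuous), and $\Theta(\alpha)\to\infty$ as $\alpha\to\infty$ so that every $\beta>0$ is attained; both are routine from the explicit formula for $\bar G_{\mathcal{F},\alpha}$.
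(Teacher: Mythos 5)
Your proposal is correct and, at its core, follows the same route as the paper: the same three-variable Lagrangian in $(\mathcal{G},n,\lambda)$, the same appeal to Lemma~\ref{lemma main M} to force the shape $\bar G_{\mathcal{F},\alpha}$ with $\alpha=1/(\lambda n^2)$, and your stationarity condition $\alpha\,\hat v(\alpha)=\beta+\hat c(\alpha)$ is exactly the paper's Equation~\ref{eq-tradeoff} multiplied through by $\sqrt{\alpha}$. The step you single out as hardest --- that $\Theta(\alpha)=\alpha\hat v(\alpha)-\hat c(\alpha)$ vanishes on the uncorrected regime, then increases strictly and without bound --- is precisely what the paper proves by splitting each side of that equation into an $\alpha$-independent term plus a signed correction supported on $[0,\bar x_{\mathcal{F},\alpha}]$, so you have identified the right key lemma even if you only sketch it. Where you genuinely differ is in the outer scalar analysis: eliminating $n$ to reduce both constrained problems to minimizing the single product $\bigl(\beta+c(\mathcal{G})\bigr)v(\mathcal{G})$, together with the explicit hyperbola $\hat v(\alpha)=V_0/\sqrt{\alpha}$, $\hat c(\alpha)=V_0\sqrt{\alpha}$. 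This unifies the two constrained problems (the paper disposes of the second with ``the proof is analogous'') and makes the $\beta=0$ degeneracy transparent, where the paper instead compares objective values across $n$ directly and observes the resulting ratio is constant once $\bar x_{\mathcal{F},\alpha}=0$. Two small corrections: first, your ``ideal regime'' computation ignores the ironing intervals $\bar I_k$; when these are nonempty the constant value of the product is $\check c(1)\check v(1)>V_0^2$ rather than $V_0^2$ (so a Cauchy--Schwarz equality case is not the right characterization of the minimizers), but since $1-\check G_{\mathcal{F},\alpha}$ still scales pointwise as $\sqrt{\alpha}$ the product remains constant in $\alpha$ until the positive-part truncation binds, and your argument survives with $V_0^2$ replaced by that constant. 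Second, your bookkeeping yields $\sqrt{\alpha}\propto 1/n$ in both $\beta=0$ cases, which agrees with the paper's own proof (where $\lambda$ is fixed and $\alpha=\lambda/n^2$) rather than with the literal statement's $\alpha\propto 1/n$ in the first case; the discrepancy is in the theorem statement, not in your derivation.
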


\begin{remark}  The requirement that $F$ be twice differentiable everywhere is for technical convenience.  Our theorems can be generalized to the case where the PDF $F$ is piece-wise twice differentiable by first approximating $F$ with everywhere twice differentiable PDFs and then taking the limit.\end{remark}

We prove these theorems after developing the necessary machinery.

Let
\begin{eqnarray}
\label{equation M} M_{\mathcal{F}, n, \lambda}(G) & =& n\int_0^{\infty} x F(x)g(x) dx + \frac{\lambda}{n} \int_0^{\infty} \frac{f(x)}{1-G(x)}dx \\
 H_{\mathcal{F}, \mathcal{G},n, \lambda}(x) & =& nx F(x) + \frac{\lambda}n \int_0^{x} \frac{f(y)}{(1-G(y))^2} dy \\
\end{eqnarray}

For notational convenience, we will often omit the subscripts when their meaning is clear from the context (e.g. $M_{\mathcal{F}, n, \lambda}(G)$ and $H(x)_{\mathcal{F}, n, \lambda}(G)$ become
as $M(G)$ and  $H(x)$, respectively).

$M(G)$ will be a function that we would like to minimize, $\delta M_{|G}(\hat G-G)$ is the derivative at $\hat G-G$ in the direction $\hat G - \bar G$, which we will want to show is always $\geq 0$ at some local minimum.  $H(x)$ will turn out to be the derivative of $M(G)$, a sort of point-wise version of $\delta M_{|G}(\hat G)$.

%


\begin{claim} \label{claim derivative of M}  $$\delta M_{\mathcal{F}, n, \lambda|G}(\hat G - G) = \int_0^{\infty} (\hat g(x) - g(x)) H(x)dx$$
\end{claim}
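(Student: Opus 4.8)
The plan is to recognize that $M_{\mathcal{F},n,\lambda}$ is simply the Lagrangian one forms to trade off cost against variance, and then to invoke the linearity of the first variation together with Claims~\ref{claim derivative of cost} and~\ref{claim derivative of variance}. Write $M(G) = M_1(G) + M_2(G)$ with $M_1(G) = n\int_0^\infty xF(x)g(x)\,dx = \mathrm{Cost}(\mathcal{F},\mathcal{G},n)$ and $M_2(G) = \frac{\lambda}{n}\int_0^\infty \frac{f(x)}{1-G(x)}\,dx = \lambda\cdot\bigl(\tfrac1n\int_0^\infty \tfrac{f(x)}{1-G(x)}\,dx\bigr)$. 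Since $\delta M_{|G}(\hat G - G)$ is by definition the $\epsilon\to 0$ limit of $\bigl(M((1-\epsilon)G+\epsilon\hat G) - M(G)\bigr)/\epsilon$, and since for each of $M_1$ and $M_2$ this difference quotient has a finite limit in the feasible direction $\hat G - G$ — which is exactly what the two earlier claims assert, using only the standing assumption that $F$, $G$, $\hat G$ have piecewise twice differentiable CDFs — the difference quotient for $M$ is the sum of those for $M_1$ and $M_2$, and the limit passes through this (finite) sum. Hence $\delta M_{|G}(\hat G - G) = \delta (M_1)_{|G}(\hat G - G) + \delta (M_2)_{|G}(\hat G - G)$.

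It then remains to substitute the two formulas and collect terms. By Claim~\ref{claim derivative of cost} (using the expression obtained in its proof, which carries the factor $n$, consistent with $\mathrm{Cost} = n\int xFg$), $\delta (M_1)_{|G}(\hat G - G) = n\int_0^\infty (\hat g(x)-g(x))\,xF(x)\,dx$. By Claim~\ref{claim derivative of variance}, after substituting $\int_x^\infty g(z)\,dz = 1-G(x)$, the functional $G\mapsto \tfrac1n\int_0^\infty \tfrac{f(x)}{1-G(x)}\,dx$ has first variation $\tfrac1n\int_0^\infty (\hat g(x)-g(x))\int_0^x \tfrac{f(y)}{(1-G(y))^2}\,dy\,dx$, so $\delta (M_2)_{|G}(\hat G - G) = \tfrac{\lambda}{n}\int_0^\infty (\hat g(x)-g(x))\int_0^x \tfrac{f(y)}{(1-G(y))^2}\,dy\,dx$. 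Adding these and pulling the common factor $(\hat g(x)-g(x))$ out of the outer integral gives
\[
\delta M_{|G}(\hat G - G) = \int_0^\infty (\hat g(x)-g(x))\left[\,nxF(x) + \frac{\lambda}{n}\int_0^x \frac{f(y)}{(1-G(y))^2}\,dy\,\right]dx,
\]
and the bracketed quantity is exactly $H(x)_{\mathcal{F},n,\lambda}$ by its definition. This is the asserted identity.

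There is no genuinely hard step here: the claim is a one-line consequence of the linearity of $\delta$ plus the two preceding claims. The only things worth being careful about are (i) lining up the $n$ and $1/n$ normalizations between $M$, $H$, and the cost/variance derivative formulas proved earlier, and (ii) the fact that splitting the defining limit into a sum is legitimate precisely because each summand already possesses a first variation. If one preferred a self-contained derivation that does not invoke the earlier claims, one could instead perturb $g\mapsto g+\epsilon(\hat g-g)$ and $G\mapsto G+\epsilon(\hat G-G)$, expand $M$ to first order in $\epsilon$ (using $\frac{d}{d\epsilon}\frac{1}{1-G-\epsilon(\hat G-G)}\Big|_{\epsilon=0} = \frac{\hat G-G}{(1-G)^2}$), write $\hat G(x)-G(x) = \int_0^x (\hat g(y)-g(y))\,dy$, and swap the order of integration by Fubini — valid since $f$ and $g$ are piecewise continuous — to reach the same expression; that integration swap is the single nonroutine manipulation, and it is the same one already carried out in the proof of Claim~\ref{claim derivative of variance}.
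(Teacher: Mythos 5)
Your proof is correct and follows the same route as the paper, which likewise obtains the claim by combining Claim~\ref{claim derivative of cost} and Claim~\ref{claim derivative of variance} via linearity of the first variation. Your version simply spells out the bookkeeping of the $n$ and $1/n$ factors and the substitution $\int_x^\infty g(z)\,dz = 1-G(x)$ that the paper leaves implicit.
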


\begin{proof}
This follows immediately from Claim~\ref{claim derivative of cost} and Claim~\ref{claim derivative of variance}.
%
%
\end{proof}

\begin{claim} \label{claim M is convex}  $M_{\mathcal{F}, n, \lambda}(G)$ is strictly convex. \end{claim}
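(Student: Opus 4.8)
The plan is to split $M=M_{\mathcal F,n,\lambda}$ into the cost term $M_1(G)=n\int_0^\infty xF(x)g(x)\,dx$ and the worst-case-variance term (scaled by $\lambda/n$) $M_2(G)=\frac{\lambda}{n}\int_0^\infty\frac{f(x)}{1-G(x)}\,dx$, to note that $M_1$ is \emph{linear} along segments in $U$ and hence contributes nothing to the second variation, and to show that $M_2$ is \emph{strictly convex} because $t\mapsto 1/(1-t)$ is strictly convex on $[0,1)$ while $\frac{\lambda}{n}f(x)\,dx$ is a nonnegative measure (here $\lambda>0$). Since a sum of a linear functional and a strictly convex functional is strictly convex, this is the statement.

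Carrying this out: for a segment $G_\epsilon=(1-\epsilon)G+\epsilon\hat G$, which again lies in the convex class $U$, we have $g_\epsilon=(1-\epsilon)g+\epsilon\hat g$, so $M_1(G_\epsilon)=(1-\epsilon)M_1(G)+\epsilon M_1(\hat G)$ is affine in $\epsilon$ and $\delta^2 M_{1|G}\equiv 0$. For $M_2$, writing $h=\hat G-G$ and using $1-G_\epsilon(x)=(1-G(x))\bigl(1-\epsilon\,h(x)/(1-G(x))\bigr)$ wherever $1-G(x)>0$ (which includes every $x$ with $f(x)>0$ whenever $M(G)<\infty$, and elsewhere the integrand is $0$), the geometric expansion and integration against $\frac{\lambda}{n}f(x)\,dx$ yield
$$\delta^2 M_{|G}(\hat G-G)\;=\;\delta^2 M_{2|G}(\hat G-G)\;=\;\frac{\lambda}{n}\int_0^\infty\frac{f(x)\,(\hat G(x)-G(x))^2}{(1-G(x))^3}\,dx\;\ge\;0,$$
which already establishes convexity of $M$.

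For \emph{strict} convexity one must rule out that this integral vanishes along a nonzero feasible direction. It is zero only if $\hat G(x)=G(x)$ for $\mathcal F$-almost every $x$; the key point is then that, under the normalization that $\mathcal G$ places no mass strictly below the support of $\mathcal F$ (apart from a possible atom at $0$), every $G\in U$ is pinned down by its restriction to the support of $\mathcal F$ together with right-continuity, so $\hat G=G$ $\mathcal F$-almost everywhere forces $\hat G\equiv G$, i.e.\ $h\equiv 0$. Hence $\delta^2 M_{|G}>0$ for every nonzero feasible direction and every $G\in U$, and $M$ is strictly convex. I expect this last step — upgrading ``$\hat G=G$ up to $\mathcal F$-null sets'' to ``$\hat G=G$ in $U$'', which is exactly where the support normalization on $\mathcal G$ earns its keep — to be the only delicate point; the geometric expansion and the accompanying limit/integral interchange are routine (and if the displayed integral is $+\infty$, that only reinforces strict positivity).
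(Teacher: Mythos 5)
Your proof is correct and follows essentially the same route as the paper's: both compute the second variation of $M$, observe that the cost term is linear while the variance term contributes $\frac{\lambda}{n}\int_0^\infty f(x)\,(\hat G(x)-G(x))^2/(1-G(x))^3\,dx \ge 0$, and then invoke the normalization that $\mathcal{G}$ places no mass below the support of $\mathcal{F}$ to upgrade nonnegativity to strict positivity for every nontrivial feasible direction. If anything, your displayed second variation is the more carefully stated version --- the paper's formula omits a factor of $\hat G - G$ in the inner integrand --- so no gap to report.
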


\begin{proof}  By Claim~\ref{claim derivative of M} the first derivative of $M_{|G}$ is $H(x) = nx F(x) + \frac{\lambda}n \int_0^{x} \frac{f(x)}{(1-G(x))^2}$.  Taking the second derivatives we get $\frac{2 \lambda}n \int_0^{x} \frac{f(y)}{(1-G(y))^3}dy$, which is positive as long as we are above some point in the support of $\mathcal{F}$ (because by Claim~\ref{claim about G} we know that $G$ is between 0 and 1).  By our assumption, the only point less than every element in the support of $\mathcal{F}$ that we consider is $0$.  Thus, the derivative at $G$ in direction $\hat G$:  $$\int_0^{\infty} ( (\hat g(x) - g(x)) \frac{2 \lambda}n \int_0^{x} \frac{f(y)}{(1-G(y))^3}dy dx$$  is positive unless $G$ and the direction are the same except for at the point 0.  However, this is never the case if the direction is non-trivial.  And thus this is always strictly positive.
\end{proof}

\begin{lemma}  \label{lemma main M} For all $\hat G$, $\delta M_{\mathcal{F}, n, \lambda|\bar G_{\mathcal{F}, \frac{\lambda}{n^2}}}(\hat G - \bar G) \geq 0$.  Moreover, $\bar G_{\mathcal{F}, \frac{\lambda}{n^2}}$ is the unique CDF $G$ for which this holds.
\end{lemma}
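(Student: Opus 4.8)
The plan is to verify the first-order optimality condition for $M_{\mathcal{F},n,\lambda}$ at the candidate $\bar G := \bar G_{\mathcal{F},\lambda/n^2}$, and then extract the uniqueness clause from the strict convexity already established in Claim~\ref{claim M is convex}. By Claim~\ref{claim derivative of M}, $\delta M_{\mathcal{F},n,\lambda \,|\, \bar G}(\hat G - \bar G) = \int_0^\infty (\hat g(x) - \bar g(x))\,H(x)\,dx$, where $H = H_{\mathcal{F},\bar G, n,\lambda}$. Since the admissible $\hat G$ are precisely the CDFs supported on $\{0\}\cup[\inf\mathrm{supp}(\mathcal{F}),\infty)$ and this set is convex, every feasible direction is of the form $\hat G - \bar G$; so it suffices to produce a constant $c^\ast$ with $H(x)=c^\ast$ for every $x\in\mathrm{supp}(\bar g)$ (including the atoms $\bar g$ may carry at the points $\bar i_k$) and $H(x)\ge c^\ast$ for every other admissible $x$. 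Given this, $\int \hat g\,H \ge c^\ast \int \hat g = c^\ast = \int \bar g\,H$, so $\delta M_{|\bar G}(\hat G-\bar G)\ge 0$.

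The heart of the matter is the evaluation of $H$ on the ``interior'' part of the support, where $\bar G = \tilde G_{\mathcal{F},\lambda/n^2}$. Here I would use the defining identity $(1-\tilde G_{\mathcal{F},\alpha}(x))^2 = \alpha f(x)/(F(x)+xf(x))$, which rewrites $\frac{\lambda}{n}\cdot\frac{f(x)}{(1-\tilde G(x))^2}$ as $\frac{\lambda}{n\alpha}(F(x)+xf(x))$; with the coupling $\alpha = \lambda/n^2$ the prefactor $\frac{\lambda}{n\alpha}$ collapses to $n$, so this equals $n\frac{d}{dx}(xF(x))$. Differentiating the closed form of $H$ then gives $H'\equiv 0$ on this region, i.e.\ $H$ is the constant $c^\ast$ there --- this is exactly the unconstrained Euler--Lagrange equation that $\tilde G$ was constructed to solve, and the choice $\alpha=\lambda/n^2$ is what makes the Lagrangian term cancel the cost term. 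A short integration then also identifies $c^\ast$ and shows that it is matched continuously at the truncation point $\bar x_{\mathcal{F},\lambda/n^2}$.

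Next I would handle the two ways $\bar G$ departs from $\tilde G$. On the lower region $[0,\bar x_{\mathcal{F},\lambda/n^2})$, where $\bar G\equiv 0$, one writes $H$ explicitly (its integral term becomes $\frac{\lambda}{n}F(x)$) and checks $H(x)\ge c^\ast$ via monotonicity of the resulting expression on $[0,\bar x)$, which holds because $\bar x$ is the point where $F/f + x$ first reaches $\lambda/n^2$ --- equivalently, where $\tilde G$ crosses $0$. On each flattening interval $\bar I_k = [\bar i_k,\bar j_k)$, where $\bar G$ equals the constant $\tilde G(\bar j_k)$, the needed facts are engineered into the construction: the cumulative modification of $\int_0^x f/(1-\bar G)^2$ caused by flattening over $\bar I_k$ is zero because $\bar i_k$ was chosen with $K_k(\bar i_k)=0$, so $H$ is unchanged --- hence still $c^\ast$ --- for all $x\ge\bar j_k$, and in particular $H(\bar i_k)=H(\bar j_k)=c^\ast$; and throughout $\bar I_k$ one controls $H$ relative to $c^\ast$ by identifying $H'|_{\bar I_k}$ with a positive multiple of $-K_k'$ and using the sign of $K_k$ guaranteed by the definition of $\check i_k$. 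Combining the interior region, the lower region, and the flattening intervals yields the complementary-slackness picture of the first paragraph, hence $\delta M_{|\bar G}(\hat G-\bar G)\ge 0$ for all admissible $\hat G$.

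For the uniqueness clause I would invoke Claim~\ref{claim M is convex}: $M$ is strictly convex on the convex set of admissible CDFs, so it has at most one minimizer, and the preceding steps exhibit $\bar G_{\mathcal{F},\lambda/n^2}$ as one; therefore it is the unique $G$ with $\delta M_{|G}(\hat G-G)\ge 0$ for all $\hat G$. (Alternatively, argue in place: $H>c^\ast$ strictly off $\mathrm{supp}(\bar g)$, so the first variation is strictly positive unless $\hat g$ is supported on $\mathrm{supp}(\bar g)$, and the positive second variation from Claim~\ref{claim M is convex} settles that remaining case.) The step I expect to be the real obstacle is the bookkeeping for the flattening intervals: checking that the $K_k$/$\check i_k$ construction simultaneously keeps $H$ continuous and equal to $c^\ast$ at both endpoints of each $\bar I_k$, annihilates each interval's contribution to the cumulative integral in $H$ for $x$ beyond the interval, and keeps $H$ on the correct side of $c^\ast$ inside the interval --- and that all of this survives the discarding of nested intervals, the relabelling, and the meeting with the truncation at $\bar x$. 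The interior Euler--Lagrange computation and the uniqueness step are routine once Claims~\ref{claim derivative of M} and~\ref{claim M is convex} are in hand.
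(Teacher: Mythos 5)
Your proposal is correct and follows essentially the same route as the paper's proof: reduce via Claim~\ref{claim derivative of M} to showing that $H$ is constant on the support of $\bar g$ and at least that constant elsewhere, verify this separately on the Euler--Lagrange region (where the coupling $\alpha=\lambda/n^2$ is exactly what makes the two terms of $H'$ cancel), on the flattening intervals $\bar I_k$ (via the $K_k$ construction), and on $[0,\bar x)$, and then obtain uniqueness from the strict convexity established in Claim~\ref{claim M is convex}. The flattening-interval bookkeeping you flag as the main obstacle is indeed where the paper's proof does its work, and your sketch of that step (using $K_k(\bar i_k)=0$ for equality of $H$ at the endpoints and the sign of $K_k$ inside the interval) matches the paper's argument.
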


\begin{proof}  By Claim~\ref{claim derivative of M} we see that we must show that for all $\hat{G}$, $\int_0^{\infty} (\hat g(x) - \bar g(x)) H(x) dx \geq 0$.  Both $\hat g$ and $\bar g$ are distribution density functions, so they integrate to one and their difference integrates to 0.   Necessarily off the support of $G$ we have $\hat g(x) \geq \bar g(x) =0$ .  Thus it is enough to show that $H(x)$ is constant on some region that contains the support of $G$ and that outside this region it is always greater than inside the region.

We show that \begin{enumerate}
 \item  $H$ is constant over the intervals  $[\bar x, \bar i_1]$, $[\bar j_k, \bar i_{k+1}]$ for $1 \leq i < m$,  and  $[\bar j_m, \infty)$; \\
 \item   For every $\bar I_k$, $H(\bar i_k) = H(\bar j_k)$ and for every $x \in \bar I_k$, $H(x) \geq H[j_k]$; and  \\
 \item  For $x \in [0, \bar x]$, $H(x) \geq H(\bar x)$.
\end{enumerate} These three steps suffice to show that for all $\hat G$, $\delta M_{\mathcal{F}, n, \lambda|\bar G_{\mathcal{F}, \frac{\lambda}{n^2}}}(\hat G - \bar G_{\mathcal{F}, \frac{\lambda}{n^2}}) \geq 0$.

Let $x \in  [\bar j_k, \bar i_{k+1}]$ for $1 \leq i < m$ then

\begin{align} H(x) - H(\bar j_k)   & =  n x F(x) - n \bar j_k F(\bar j_k) + \frac{\lambda}{n}  \int_{\bar j_k}^{x}  \frac{f(y)}{(1-\bar G_{\mathcal{F}, \frac{\lambda}{n^2}}(y))^2} dy  \\
                                   & =  n x F(x) - n \bar j_k F(\bar j_k) + \frac{\lambda}{n}  \int_{\bar j_k}^{x}  \frac{f(y)}{(1-\tilde G_{\mathcal{F}, \frac{\lambda}{n^2}}(y))^2} dy  \\
                                   & = 0
\end{align}

The last line follows because $\tilde{G}_{\mathcal{F}, \frac{\lambda}{n^2}}(x) = 1 - \sqrt{\frac{\lambda}{n^2} \frac{f(x)}{F(x) + xf(x)}}$ so $$\frac{\lambda}{n}  \int_{\tilde j_k}^{x}  \frac{f(y)}{(1- \tilde G_{\mathcal{F}, \frac{\lambda}{n^2}}(y))^2} dy = \frac{\lambda}{n}  \left( \frac{ n^2}{\lambda} y f(x) + F(y) \right)_{\tilde j_k}^{x} =   n x F(x) - n \bar j_k F(\bar j_k).$$  At the single point $\bar i_{k+1}$, it might not be that $\bar G_{\mathcal{F}, \frac{\lambda}{n^2}}(y) = \tilde G_{\mathcal{F}, \frac{\lambda}{n^2}}(y)$, but because this is a single point,  and $f(x)$ is differentiable, this will not affect the calculation.

The same argument works for $x \in [\bar x, \bar i_1]$, and $x \in [\bar j_m, \infty)$

To show 2) observe that:

\begin{align} H(\bar j_k) - H(x)   & =  n \bar j_k F(\bar j_k)- n x F(x)  + \frac{\lambda}{n}  \int_{x}^{\bar j_k}  \frac{f(y)}{(1-\bar G(y))^2} dy  \\
                                  & =n \bar j_k F(\bar j_k) - n x F(x) + \frac{\lambda}{n}  \int_{x}^{\bar j_k}  \frac{f(y)}{(\int_y^{\infty} \tilde g(z) dz)^2} dy
                                                + \frac{\lambda}{n} \left(\int_{x}^{\bar j_k}    \frac{f(y)}{(1-\bar G(y))^2} - \frac{f(y)}{(\int_y^{\infty} \tilde g(z) dz)^2} dy \right) \\
                  &=  \frac{\lambda}{n} \int_{x}^{\bar j_k}    \frac{f(y)}{(1-\bar G(y))^2} - \frac{f(y)}{(\int_y^{\infty} \tilde g(z) dz)^2} dy
\end{align}

However, by the definition of $\bar I_k$, for $\bar i_k \leq x \leq \bar j_k$ this is less that or equal to 0 with equality when $x = \bar i_k$.

Finally, we show that for $x \in [0, \bar x]$, $H(x) \leq H(\bar x)$.  As in the previous setup we see that:

\begin{align} H(\bar x) - H(x)  &=   \frac{\lambda}{n} \int_{x}^{\bar j_k}    \frac{f(y)}{(1-\bar G(y))^2} - \frac{f(y)}{(\int_y^{\infty} \tilde g(z) dz)^2} dy
\end{align}

But this is negative because for $x < \bar x$ we have that $\int_x^{\infty} \tilde g(z) dz \geq 1$, but $\bar G(y) = 0$.

The uniqueness follows directly from the fact that we found a local minimum and by Claim~\ref{claim M is convex} the function is strictly convex over our space which is also convex.
\end{proof}

\begin{proof}[of Theorem ~\ref{theorem threshold fixed n}]:
Let $F$ be twice differentiable for all positive $x$, and let $n$ be fixed.  We minimize $V^*(G)$ subject to a expected cost constraint $B$.  From Lagrange Multipliers (see section~\ref{subsec calc of var}) we have
\begin{align*} L[G, \lambda] & = \textrm{V}^*_{\mathcal{F}, n}(G) + \lambda \left(\textrm{Cost}_{\mathcal{F}, n}(\mathcal{G})- B\right)\\
                  & =  \frac{1}{n}\int_0^{\infty} \int_{0}^x \frac{f(y)}{1 - G(y)}dy dx + \lambda n \left(\int_0^{\infty} \left(x F(x) g(x) dx - B \right) \right)
\end{align*}

Thus we have that:
\begin{eqnarray}
 \label{eq1}\frac{\partial L[G, \lambda]}{\partial G}  &=  \delta M_{\mathcal{F}, n, \frac{1}{\lambda}}(G) & \geq 0 \\
 \label{eq2}\frac{\partial L[G, \lambda]}{\partial \lambda}  &= n \int_0^{\infty} x F(x) g(x) dx - B  &= 0
\end{eqnarray}

where $M$ is define in Equation~\ref{equation M}.
By Equation~\ref{eq2} we see that $\int_0^{\infty} n x F(x) \bar g_{\mathcal{F}, n,  \frac{1}{\lambda}}(x) dx = B/n$ and by Remark~\ref{remark cont in alpha} this uniquely determines $\lambda$.
From Lemma~\ref{lemma main M} we see that Equation~\ref{eq1} implies that $G = \bar G_{\mathcal{F}, \frac{1}{\lambda n^2}}$.

The proof is analogous when we are minimizing the cost with a bound on the variance.
\end{proof}

\begin{proof}[of Theorem ~\ref{theorem convex fixed n}]:
$$\delta \Phi_{| G}(\hat G - G) = \frac{\partial \Phi}{\partial x_1}_{|Cost(G)} \delta Cost(\hat G -G) + \frac{\partial \Phi}{\partial x_1}_{|V^*(G)} \delta V^*_{G}(\hat G -G) =
\delta M_{F, n,  \lambda|G}(\hat G - G)$$ where $n = \frac{\partial \Phi}{\partial x_1}_{|Cost(G)}$ and $\lambda = \frac{\partial \Phi}{\partial x_1}_{|V^*(G)} n$.
So for any fixed $\bar G_{\mathcal{F}, \alpha}$ to be optimal we must have $\delta M_{F, n,  \lambda|G}(\hat G - \bar G_{\mathcal{F}, \alpha}) \geq 0$ for all $\hat G$.
However, by Lemma \ref{lemma main M} we see that if this is the case, then $G$ must be of the form $\bar G_{\mathcal{F}, \alpha} (x)$ for some $$\alpha = \frac{\lambda}{n^2} = \frac{\frac{\partial \Phi}{\partial x_2}_{|V^*(\bar G_{\mathcal{F}, \alpha})}}{\frac{\partial \Phi}{\partial x_1}_{|Cost(\bar G_{\mathcal{F}, \alpha})}}. $$

The RHS, $\alpha$, is strictly monotonically increasing in $\alpha$ and goes from 0 to infinity as $\alpha$ does, where as the LHS is monotonically decreasing with respect to $\alpha$ (because  $V^*(\bar G_{\mathcal{F}, \alpha})$ decreases with $\alpha$ and $Cost^*(\bar G_{\mathcal{F}, \alpha})$ increases with $\alpha$, and $\Phi$ is convex), and is always positive because $\Phi$ is monotone.  Because all the functions are continuous, there exists a unique solution for $\alpha$.  Because $\Phi$, $Cost$ and $V^*$ are all convex the composition is, and thus this is a global maximum.
\end{proof}

Now we generalize the results to the setting where the number participants is not fixed a priori, but there may be some cost $\beta$ associated with each additional person recruited.

\begin{proof}[of Theorem ~\ref{theorem threshold variable n}]:

Thus the cost of the mechanism becomes $$\textrm{Cost}(\mathcal{F}, \mathcal{G}, n, \beta) = n \left( \beta + \int_0^{\infty} x F(x) g(x) dx\right)$$ and the variance remains the same.  We will deal with the setting where we would like to minimize the variance subject to some cost constraint.  This give us the following problem:

$$\textrm{Minimize    } \frac{1}{n}\int_{0}^\infty\frac{f(x)}{1-G(x)}$$
$$\textrm{Subject to:    } n \left( \beta +  \int_0^\infty c\cdot g(c)\cdot F(c) dc \right) = B$$
Using a Lagrange multiplier, we arrive at the equation:

\begin{align*} L[G, \lambda] & = \textrm{V}^*_{\mathcal{F}, n}(G) + \lambda \left(\textrm{Cost}_{\mathcal{F}, n}(\mathcal{G})- B\right)\\
                  & =  \frac{1}{n}\int_0^{\infty} \frac{f(x)}{1 - G(x)} dx + \lambda \left( n \left(\beta + \int_0^{\infty} \left(x F(x) g(x) dx \right) - B \right) \right)
\end{align*}

Thus we have that:
\begin{eqnarray}
 \label{eq1a}\frac{\partial L[G, n, \lambda]}{\partial G}  &=  \delta M_{\mathcal{F}, n, \frac{1}{\lambda}}(G) & \geq 0 \\
 \label{eq2a}\frac{\partial L[G, n, \lambda]}{\partial n}  &= \frac{-1}{n^2}\int_0^{\infty} \frac{f(x)}{1 - G(x)} dx + \lambda \left(\beta + \int_0^{\infty} x F(x) g(x) dx  \right)  &= 0\\
 \label{eq3a}\frac{\partial L[G, n, \lambda]}{\partial \lambda}  &= n \left(\beta +  \int_0^{\infty} x F(x) g(x) dx \right)- B  &= 0
\end{eqnarray}

where $M$ is define in Equation~\ref{equation M}.

From Lemma~\ref{lemma main M} we see that Equation~\ref{eq1a} implies that $G = \bar G_{\mathcal{F}, \frac{1}{\lambda n^2}}$.  If we let $\alpha = \frac{1}{\lambda n^2}$, then Equation~\ref{eq2a} then implies that
\begin{align} \label{eq-tradeoff}\sqrt{\alpha}(\textrm{V}^*_{\mathcal{F}, 1}( \bar G_{\mathcal{F}, \alpha})) = \frac{1}{\sqrt{\alpha}}(\textrm{Cost}_{\mathcal{F}, 1}( \bar G_{\mathcal{F}, \alpha}) + \beta). \end{align}

However,
\begin{align*}
\sqrt{\alpha}(\textrm{V}^*_{\mathcal{F}, 1}( \bar G_{\mathcal{F}, \alpha}))
  & = \sqrt{\alpha} \left[ \int_0^{\infty} \frac{f(x)}{\int_{x}^{\infty} \check g_{\mathcal{F}, \alpha}(y)dy} dx + \left(\int_0^{\bar x_{\mathcal{F}, \alpha}} \frac{f(x)}{\int_{x}^{\infty} \bar g_{\mathcal{F}, \alpha}(y) dy} - \frac{f(x)}{\int_{x}^{\infty} \check g_{\mathcal{F}, \alpha}(y) dy} dx \right)  \right]\\
  & = \int_0^{\infty} \frac{f(x)}{\int_{x}^{\infty} \check g_{\mathcal{F}, 1}(y) dy} dx +  \sqrt{\alpha} \left(\int_0^{\bar x_{\mathcal{F}, \alpha}} \frac{f(x)}{\int_{x}^{\infty} \bar g_{\mathcal{F}, \alpha}(y) dy} - \frac{f(x)}{\int_{x}^{\infty} \check g_{\mathcal{F}, \alpha}(y) dy} dx \right)
\end{align*}

And this increases in $\alpha$ because, by the definition of $\bar x_{\mathcal{F}, \alpha}$, the integrand in the second term is positive.  Similarly,

\begin{align*}
 \frac{1}{\sqrt{\alpha}}(\textrm{Cost}^*_{\mathcal{F}, 1}( \bar G_{\mathcal{F}, \alpha}) + \beta)
  & = \frac{1}{\sqrt{\alpha}} \left[ \int_0^{\infty} x F(x) \check g_{\mathcal{F}, \alpha}(x) dx - \left(\int_0^{\bar x_{\mathcal{F}, \alpha}} x F(x) \check g_{\mathcal{F}, \alpha}(x) dx \right) + \beta \right]\\
  & =  \int_0^{\infty} x F(x) \check g_{\mathcal{F}, 1}(x) dx -  \int_0^{\bar x_{\mathcal{F}, \alpha}} x F(x) \check g_{\mathcal{F}, 1}(x) + \frac{\beta}{\sqrt{\alpha}} dx
\end{align*}
And this decreases to 0 as $\alpha$ increases.  Moreover, if $\beta > 0$ then the RHS monotonically decreases from $+\infty$ to $0$, and so there exists a unique $\alpha$ that satisfies Equation~\ref{eq-tradeoff}.  Having derived a value for $\alpha$, Equation~\ref{eq3a} uniquely determines $n$.

In the other case, $\beta = 0$, and we can solve with different values of $n$, as in Theorem~\ref{theorem threshold fixed n}, and directly compare the values of the objective function.
As we have seen, the optimal $G$ is of the form $\bar G_{\mathcal{F}, \alpha}$ where $\alpha = \frac{\lambda}{n^2}$.  From Equation~\ref{eq2} (or Equation~\ref{eq2a}) and using a derivation similar to that above, we see that

$$\sqrt{\lambda} = B/\left(  \int_0^{\infty} x F(x) \check g_{\mathcal{F}, 1}(x) dx -  \int_0^{\bar x_{\mathcal{F}, \alpha}} x F(x) \check g_{\mathcal{F}, 1}(x) + \frac{\beta}{\sqrt{\alpha}} dx \right),$$ and the objective function becomes
\begin{align*}
\frac{1}{n}\int_{0}^\infty\frac{f(x)}{1-\bar G_{\mathcal{F}, \alpha}(x)} dx  &=   \sqrt{\lambda} \left( \int_0^{\infty} \frac{f(x)}{\int_{x}^{\infty} \check g_{\mathcal{F}, 1}(y) dy} dx +  \sqrt{\alpha} \left(\int_0^{\bar x_{\mathcal{F}, \alpha}} \frac{f(x)}{\int_{x}^{\infty} \bar g_{\mathcal{F}, \alpha}(y) dy} - \frac{f(x)}{\int_{x}^{\infty} \check g_{\mathcal{F}, \alpha}(y) dy} dx \right) \right) \\
& = B \frac{ \int_0^{\infty} \frac{f(x)}{\int_{x}^{\infty} \check g_{\mathcal{F}, 1}(y) dy} dx +  \sqrt{\alpha} \left(\int_0^{\bar x_{\mathcal{F}, \alpha}} \frac{f(x)}{\int_{x}^{\infty} \bar g_{\mathcal{F}, \alpha}(y) dy} - \frac{f(x)}{\int_{x}^{\infty} \check g_{\mathcal{F}, \alpha}(y) dy} dx \right) }
{ \int_0^{\infty} x F(x) \check g_{\mathcal{F}, 1}(x) dx -  \int_0^{\bar x_{\mathcal{F}, \alpha}} x F(x) \check g_{\mathcal{F}, 1}(x) + \frac{\beta}{\sqrt{\alpha}} dx}
\end{align*}
This is minimized if $\bar x_0 = 0$ because the extra term in the numerator is positive and the extra term in the denominator is negative. In the degenerate case where $\bar x_{\mathcal{F}, \alpha}$ is always greater than 0, a solution may not exist.  This happens if and only if there is mass on $\mathcal{F}$ arbitrarily close to 0.  The reason for this is that the surveyor will want an unbounded number of samples close to 0, because no matter how many samples he has, their price to variance trade off is cheap because their price is arbitrarily small.
Otherwise, after $n$ is above some threshold $\alpha$ will be sufficiently small so that $\bar x_0 = 0$.  At this point, the objective function will remain the same regardless of $n$, and there exists a inverse relationship between $\sqrt{\alpha}$ and $n$ because $\lambda$ is fixed and $\alpha = \frac{\lambda}{n^2}$.

%
%
\end{proof}

\section{Discussion and Open Problems}
In this paper, we have initiated the study of truthful, Bayesian-optimal survey design for generating unbiased estimators of a population statistic. Of course, there are many interesting directions left open for future work. For example:
\begin{enumerate}
\item In this paper, we have focused on mechanisms which promise \emph{unbiased} estimators, and have focused on optimizing properties of these mechanisms (i.e. variance, cost, etc.). However, if the surveyor is in the end interested in \emph{accuracy}, he may be willing to compute a \emph{biased} estimator if doing so can significantly reduce variance without increasing cost (error, in the end, is a function both of the bias of the estimator, and of its variance). What do optimal mechanisms look like in the regime of minimizing error, when we cannot necessarily restrict our attention to the Horvitz-Thompson estimator?
\item We have assumed a known prior over costs, but have otherwise made worst-case assumptions on what the distributions on surveyed types looks like. Is it possible to do substantially better in some cases if we make further (natural) assumptions on the distribution from which player types are drawn? One natural assumption might be that $\E[q(x)|c(x)]$ is monotonically increasing in $c(x)$. For example, we would expect this to be the case if $q(x) = 1$ represents some stigmatized property, and so such individuals might have higher cost for participating in such a survey.
\item What if costs are indeed drawn from some prior distribution, but this distribution is unknown to the mechanism. Can a mechanism which must now learn this distribution over costs compete with the optimal mechanisms derived here, with prior knowledge over the true distribution?
\end{enumerate}

\bibliographystyle{alpha}
\bibliography{truthfulsurveys}

\begin{thebibliography}{SMOG11}

\bibitem[GJ65]{GJ65}
VP~Godambe and VM~Joshi.
\newblock Admissibility and bayes estimation in sampling finite populations. i.
\newblock {\em The Annals of Mathematical Statistics}, 36(6):1707--1722, 1965.

\bibitem[GR11]{GR11}
A.~Ghosh and A.~Roth.
\newblock Selling privacy at auction.
\newblock In {\em EC 2011: Proceedings of the 12th ACM conference on Electronic
  commerce}, pages 199--208. ACM, 2011.

\bibitem[HT52]{HT52}
D.G. Horvitz and D.J. Thompson.
\newblock A generalization of sampling without replacement from a finite
  universe.
\newblock {\em Journal of the American Statistical Association}, pages
  663--685, 1952.

\bibitem[Kri09]{Auctionbook}
V.~Krishna.
\newblock {\em Auction theory}.
\newblock Academic press, 2009.

\bibitem[Lib12]{Liberzon}
D.~Liberzon.
\newblock {\em Calculus of Variations and Optimal Control Theory: A Concise
  Introduction}.
\newblock Princeton University Press, 2012.

\bibitem[LL91]{Surveybook1}
P.S. Levy and S.~Lemeshow.
\newblock {\em Sampling of populations: methods and applications}.
\newblock 1991.

\bibitem[LR12]{LR12}
K.~Ligett and A.~Roth.
\newblock Take it or leave it: Running a survey when privacy comes at a cost.
\newblock 2012.
\newblock Manuscript.

\bibitem[Mye81]{Mye81}
R.B. Myerson.
\newblock Optimal auction design.
\newblock {\em Mathematics of operations research}, pages 58--73, 1981.

\bibitem[SMOG11]{Surveybook2}
R.L. Scheaffer, W.~Mendenhall, R.L. Ott, and K.~Gerow.
\newblock {\em Elementary survey sampling}.
\newblock Duxbury Pr, 2011.

\bibitem[War65]{War65}
S.L. Warner.
\newblock Randomized response: A survey technique for eliminating evasive
  answer bias.
\newblock {\em Journal of the American Statistical Association}, pages 63--69,
  1965.

\end{thebibliography}

\end{document}